\newtheorem{fact}{Fact}
\newcommand{\junk}[1]{}
\begin{document}
\title{Quantum and approximation  algorithms for maximum witnesses
of Boolean matrix products}

\author{
Miros{\l}aw Kowaluk
\inst{1}
\and
Andrzej Lingas
\inst{2}
\institute{Institute of Informatics, University of Warsaw, Warsaw, Poland. 
\texttt{kowaluk@mimuw.edu.pl}
\and
Department of Computer Science, Lund University, 22100 Lund, Sweden. 
\texttt{Andrzej.Lingas@cs.lth.se}}
}
\maketitle

\begin{abstract}
The problem of finding maximum (or minimum) witnesses of the Boolean
product of two Boolean matrices (MW for short) has a number of
important applications, in particular the all-pairs lowest common
ancestor (LCA) problem in directed acyclic graphs (dags).
The best known upper time-bound on the MW problem for $n\times n$
Boolean matrices of the form $O(n^{2.575})$ has not been substantially
improved since 2006.
In order to obtain faster algorithms for this problem, we study
quantum algorithms for MW and approximation algorithms for MW (in the
standard computational model).  Some of our quantum algorithms are
input or output sensitive.  Our fastest quantum algorithm for the MW
problem, and consequently for the related problems, runs in time
$\tilde{O}(n^{2+\lambda/2})=\tilde{O}(n^{2.434})$, where $\lambda $
satisfies the equation $\omega(1, \lambda, 1) = 1 + 1.5 \, \lambda $
and $\omega(1, \lambda, 1)$ is the exponent of the multiplication of
an $n \times n^{\lambda}$ matrix by an $n^{\lambda} \times n$ matrix.
Next, we consider a relaxed version of the MW problem (in the standard
model) asking for reporting a witness of bounded rank (the maximum
witness has rank 1) for each non-zero entry of the matrix product.
First, by adapting the fastest known algorithm for maximum
witnesses, we obtain an algorithm for the relaxed problem
that reports for each non-zero entry of the product matrix
a witness of rank at most $\ell$ in time
$\tilde{O}((n/\ell)n^{\omega(1,\log_n \ell,1)}).$ Then, by
reducing the relaxed problem to the so called $k$-witness problem, we
provide an algorithm that reports for each non-zero entry $C[i,j]$ of
the product matrix $C$ a witness of rank $O(\lceil W_C(i,j)/k\rceil
)$, where $W_C(i,j)$ is the number of witnesses for $C[i,j]$, with
high probability.  The algorithm runs in 
$\tilde{O}(n^{\omega}k^{0.4653} +n^{2+o(1)}k)$ time, where $\omega=\omega(1,1,1)$.
\end{abstract}
\section{Introduction}
If $A$ and $B$ are two $n\times n$ Boolean matrices
and $C$ is their Boolean matrix product then for any
entry $C[i,j]=1$ of $C,$ a  {\em witness} is an index
$k$ such that $A[i,k]\wedge B[k,j]=1.$
The largest  (or, smallest) possible witness for an entry is 
called the {\em maximum  witness} (or {\em minimum witness}, respectively)
for the entry.

The problem of finding ``witnesses'' of Boolean matrix product
has been studied for decades mostly because of its
applications to shortest-path problems \cite{AGM,AN96}.
The problem of finding maximum witnesses of Boolean matrix
product (MW for short) has been studied first  in \cite{CK07}  in order
to obtain  faster algorithms for all-pairs lowest common ancestor (LCA)
problem in directed acyclic graphs (dags) \cite{GIL}.
It has found many other applications since then 
including the all-pairs 
bottleneck weight path problem \cite{SYZ11}
and finding for a set of edges in a vertex-weighted graph
heaviest triangles including an edge from the set \cite{VWY10}.
The fastest known algorithm for the MW problem
and the aforementioned problems
 runs in $O(n^{2+\lambda})$ time \cite{CK07}, where $\lambda $
satisfies the equation $\omega(1, \lambda, 1) = 1 + 2\, \lambda $
and $\omega(1, \lambda, 1)$ is the exponent of the multiplication of
an $n \times n^{\lambda}$ matrix by an $n^{\lambda} \times n$
matrix. 
The currently best bounds on $\omega(1,\lambda,1)$
follow from a fact in \cite{C97,HP} (see 
Fact \ref{fact:1} in Preliminaries)
combined  with
the recent improved estimations
on the parameters $\omega=\omega(1,1,1)$ and $\alpha$ (see Preliminaries)
\cite{LG14,LGU}. They yield an
$O(n^{2.569})$ upper bound on the  running time of the algorithm 
(originally, $O(n^{2.575})$ \cite{CK07}).
For faster algorithms in sparse cases, see
\cite{CY}.

\junk{
Also in both areas, useful generalizations and/or specializations
of the problems of finding witnesses have been studied.
A natural generalization introduced for string matching in
\cite{M96} is to request up to $k$ witnesses
instead of a single one. It has been efficiently
solved  by using concepts from
group testing in \cite{ALL07} and conveyed
to Boolean matrix product in \cite{ALL07,GKL08}.
A natural specialization is to request minimum or maximum witnesses.
This specialization has been introduced and efficiently solved
in \cite{CK07} in the context of finding lowest common ancestors
in directed acyclic graphs and it found many other applications since
then (cf. \cite{CY,SYZ11,VWY10}).

In analogy to witnesses for Boolean matrix product,
if $a$ and $b$ are two $n$-dimensional Boolean vectors
and $c$ is their Boolean convolution then for any
coordinate $c_i=1$ of $c,$ a {\em witness} is an index
$l$ such that $a_l\wedge b_{i-l}=1$.
In contrast to string matching and Boolean matrix product,
the problem of computing the witnesses of Boolean vector
convolution does not seem to be explicitly studied
in the literature. On the other hand, Boolean vector
convolution is very much related to string matching \cite{FP74},
and hence the algorithms for reporting witness
or more generally up to $k$ witnesses can be easily conveyed
from stringology to Boolean vector convolution 
(see Proposition 1 in section 3).
}

In this paper, we study two different approaches to
deriving faster algorithms for the problem of
computing maximum (or minimum) witnesses of the Boolean product of two
$n\times n$ Boolean matrices (MW for short). The first approach
is to consider the MW problem in the more powerful 
model of quantum computation. The other approach
is to relax the MW problem (in the standard model)
by allowing its approximation.

In the first part of
our paper, we present quantum algorithms for the MW problem
assuming a Quantum Random Access Machine (QRAM) model
\cite{NC}.  First, we consider a straightforward algorithm
for MW that uses the quantum minimum search
due to D\"{u}rr and H{\o}yer \cite{DH96}
for each entry of the product matrix separately in
order to find its maximum witness
\footnote{For somewhat related applications of the quantum minimum search 
of D\"{u}rr and H{\o}yer to shortest path problems see \cite{NV}.}.
 It runs in $\tilde{O}(n^{2.5})$ time.
By adding as a preprocessing a known output-sensitive
quantum algorithm for Boolean matrix product,
 we obtain 
an output-sensitive quantum algorithm for MW
running in $\tilde{O}(n\sqrt s + s\sqrt n)$ time, where $s$ is the number
of non-zero  entries in the product matrix.
By refining the straightforward algorithm in a different way, we obtain also an input-sensitive
quantum algorithm for MW running in $\tilde{O}(n^2+n^{1.5}m^{0.5})$
time, where $m$ is the number of non-zero entries
in the sparsest among the two input matrices.
Then, we
combine the idea of multiplication of rectangular submatrices of the input
Boolean matrices with that of using
the quantum minimum search of D\"{u}rr and H{\o}yer
in order to obtain our fastest quantum
algorithm for MW running in $\tilde{O}(n^{2+\lambda/2})$ time, where
$\lambda $ satisfies the equation $\omega(1, \lambda, 1) = 1 + 1.5 \,
\lambda $.
By the currently best bounds on
$\omega(1,\lambda,1)$, the
\vfill
\newpage
\noindent
running time of our algorithm is
$O(n^{2.434})$
\footnote{In the upper bound on the 
running time of our fastest quantum algorithm
for MW,
we could replace $\omega(\ )$ by its generalization
to include the model of quantum computation.
However, since no quantum algorithms
for Boolean matrix product faster than those algebraic 
ones in the general case are known so far, this would not yield
an improvement at present.}.
We obtain the same asymptotic upper time-bounds
for the aforementioned problems related to MW. 

In the second part of our paper, we consider a relaxed version of the MW
problem (in the standard model) asking
for reporting a witness of bounded rank (the maximum  witness
has rank 1) for each non-zero entry of the matrix product.
First, by adapting the fastest known algorithm for maximum
witnesses, we obtain an algorithm for the relaxed problem
that reports for each non-zero entry of the product matrix
a witness of rank at most $\ell$ in time
$\tilde{O}((n/\ell)n^{\omega(1,\log_n \ell,1)}).$
  Then, by reducing the relaxed problem to the so called $k$-witness
problem, we provide an algorithm that reports for each
non-zero entry $C[i,j]$ of the product matrix $C$
a witness of rank $O(\lceil W_C(i,j)/k\rceil )$ with
high probability, where $W_C(i,j)$
is the number of witnesses for $C[i,j]$.
The algorithm runs in $\tilde{O}(n^{\omega}k^{0.4653} +n^{2+o(1)}k)$
time,
where
$\omega$ is the exponent of fast $n\times n$ matrix multiplication.

\subsection{Organization}
In Preliminaries, we provide some basic
notions and/or facts on matrix multiplication
and quantum computation. In Section 3, we present
our basic procedure for searching an interval of
indices for the maximum witness, the straightforward
quantum algorithm for MW implied by the procedure, and the
output-sensitive and input-sensitive refinements
of the algorithm. In Section 4, we present
and analyze our fastest in the general case
quantum algorithm for MW.
In Section 5, we
provide applications of our quantum algorithms
to the problems related to MW.
In Section 6, we present our approximation algorithms
for MW in the standard computational model.
We conclude with final remarks.
\junk{For applications of our quantum algorithms
to the problems related to MW as well as
additional remarks the reader is referred to
the full version \cite{KL}.
Let $\omega(1,r,1)$ denote the exponent of fast arithmetic
multiplication of an
$n \times n^r$ matrix by an $n^r \times n$ matrix.
In particular, $\omega(1,1,1)$ denoted by $\omega$
is known to not exceed $2.373$
\cite{LG14,Vassilevska12}. Next, let 
the notation $\tilde{O}(\ )$ suppress
poly logarithmic in $n$ factors.
Our main contributions are as follows:
}
\section{Preliminaries}

For a positive integer $r,$ we shall denote
the set of positive integers not greater than $r$
by $[r].$

For a matrix $D,$ $D^t$ denotes its transpose.

A \textit{witness} for a non-zero entry $C[i, j]$ of the Boolean matrix product
$C$ of a Boolean $p\times q$ matrix  $A$ and  a Boolean $q\times r$ matrix $B$ is any index $k\in [q]$ such that $A[i,k]$
and $B[k, j]$ are equal to 1. The number of witnesses for $C[i,j]$
is denoted by $W_C(i,j).$
A witness $k$ for $C[i, j]$ is of rank $h$ if there are exactly $h-1$
witnesses for this entry greater than $k.$ The witness of rank $1$
is the {\em maximum witness} for $C[i, j]$.
The \textit{witness problem} is to report
a witness for each non-zero entry of the Boolean matrix product
of the two input matrices.
The \textit{maximum witness problem} (MW) is to report
the maximum witness for each non-zero entry of the Boolean matrix product
of the two input matrices.

Recall that for natural numbers $p,\ q,\ r,$ $\omega(p,q,r)$ denotes the exponent 
of fast matrix multiplication for rectangular matrices
$n^p\times n^q$ and $n^q\times n^r,$ respectively.
For convenience, $\omega =\omega(1,1,1).$
The following recent upper bound on $\omega $ is
due Alman and Vassilevska Williams \cite{AV21}.
\begin{fact}
\label{fact: omega}
The fast matrix multiplication algorithm for
$n\times n$ matrices runs in $O(n^{\omega})$
time, where $\omega$ is not greater than $2.37286$
\cite{AV21} (cf. \cite{LG14,Vassilevska12}).
\end{fact}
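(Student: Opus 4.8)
Strictly speaking there is nothing here that I would prove from scratch: Fact~\ref{fact: omega} records the current world record on the exponent of square matrix multiplication, and my ``proof'' would consist solely of the citation to Le Gall~\cite{LG14} (with the numerically almost identical bound of Vassilevska Williams~\cite{Vassilevska12} as a backup reference). The reason to recall the statement at this point is purely instrumental: every concrete exponent quoted later in the paper --- the $O(n^{2.434})$ evaluation of the running time $\tilde{O}(n^{2+\lambda/2})$ of the fastest quantum algorithm, the $O(n^{2.569})$ bound on the Czumaj--Kowaluk MW algorithm, and the exponents hidden in $k^{0.4653}$ and $n^{\omega(1,\log_n\ell,1)}$ appearing in the approximation results --- is obtained by substituting this value of $\omega$, together with the rectangular-multiplication estimates recalled above and below, into the relevant closed-form expressions. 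No independent argument is needed, so the plan is simply to invoke it as a black box.

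If one nonetheless wishes to indicate the source of the bound, the underlying mechanism is the \emph{laser method} applied to large tensor powers of the Coppersmith--Winograd family. One starts from a fixed small bilinear ``seed'' tensor $CW_q$ whose border rank is close to its number of variables, raises it to a high tensor power, and then zeroes out a carefully chosen set of variables so that the degenerated tensor splits into a direct sum of many independent small matrix-multiplication tensors; feeding the number and the shapes of these summands into Sch\"onhage's asymptotic sum inequality yields an upper bound on $\omega$. Passing from the Coppersmith--Winograd value $2.376$ down to $2.3728639$ amounts to analysing the fourth (and higher) tensor powers of $CW_q$ and carrying out an increasingly delicate, computer-assisted optimisation of the splitting parameters.

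The one genuinely hard point in such a derivation --- and the reason I would not attempt to reproduce it --- is controlling the ``value loss'' incurred at each level of the recursive merging of blocks: one must show that the asymptotic sum inequality can be applied across the tensor power with only a bounded penalty, which hinges on a careful value analysis of the degenerated tensor and on a nontrivial numerical programme. Since this is exactly the content of~\cite{LG14,Vassilevska12}, I will take $\omega\le 2.3728639$ as given and proceed, treating any future improvement to $\omega$ (or to $\omega(1,\lambda,1)$) as something that would propagate automatically into all of the paper's bounds.
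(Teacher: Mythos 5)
Your treatment matches the paper exactly: Fact~\ref{fact: omega} is stated there without proof, purely as a citation to Le Gall \cite{LG14} (cf.\ \cite{Vassilevska12}), and your decision to invoke it as a black box is precisely what the authors do. The additional sketch of the laser method is accurate background but not required, since no proof is given or expected in the paper.
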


Alon and Naor provided almost equally fast solution
to the witness problem for square Boolean matrices in \cite{AN96}.
It can be easily generalized to include the Boolean product
of two rectangular Boolean matrices of sizes $n\times n^q$
and $n^q\times n,$ respectively.
The asymptotic matrix multiplication time $n^{\omega}$
is replaced by $n^{\omega(1,q,1)}$ in the generalization.

\begin{fact}
\label{fact: wit}
For $q\in (0,1],$
the witness problem for the Boolean matrix
product of an $n\times n^q$ Boolean matrix
with an $n^q\times n$ Boolean matrix
can be solved (deterministically) in $\tilde{O}(n^{\omega(1,q,1)})$
time.
\end{fact}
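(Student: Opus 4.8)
\medskip
\noindent\textbf{Proof plan.}
The plan is to carry the deterministic witness algorithm of Alon and Naor~\cite{AN96} for square Boolean matrices over to the rectangular setting almost verbatim, and simply track how the two dimensions enter its running time. Recall the isolation idea: fix $A$ of size $n\times n^{q}$ and $B$ of size $n^{q}\times n$, so that every witness of an entry $C[i,j]$ lies in $[n^{q}]$; for a set $S\subseteq[n^{q}]$ define integer matrices $\hat{A}_{S}$ and $\hat{B}$ by $\hat{A}_{S}[i,k]=k$ if $k\in S$ and $A[i,k]=1$ (and $0$ otherwise) and $\hat{B}[k,j]=B[k,j]$. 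If $S$ contains exactly one witness $k$ of $C[i,j]$, the $(i,j)$ entry of the integer product $\hat{A}_{S}\hat{B}$ equals $k$, revealing the witness; so it suffices to supply, for every pair $(i,j)$, some set in our collection that isolates one of its witnesses.

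First I would run this over the $O(\log n)$ scales $t=0,1,\dots,\lceil q\log_{2}n\rceil$, using at scale $t$ sets of size about $2^{t}$ (every non-zero entry has between $1$ and $n^{q}$ witnesses, so some scale matches). Every integer occurring is at most $n^{q}\le n$ in $\hat{A}_{S}$ and $\hat{B}$, and at most $n^{2q}\le n^{2}$ in their product, hence has an $O(\log n)$-bit representation; since a word operation on such integers costs $\tilde{O}(1)$, one product $\hat{A}_{S}\hat{B}$ is merely a multiplication of an $n\times n^{q}$ integer matrix by an $n^{q}\times n$ integer matrix, computable in $\tilde{O}(n^{\omega(1,q,1)})$ time by the definition of $\omega(1,q,1)$ (the polylog overhead coming only from the entry bit length).

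To make the procedure deterministic, I would, following~\cite{AN96}, replace random subsampling at scale $t$ by the explicit family of $\mathrm{poly}\log n$ subsets of $[n^{q}]$ that, for every subset of the relevant size and every designated element of it, isolates that element in some member of the family. Running the product above once per set, over all scales and all families, multiplies the cost only by $\mathrm{poly}\log n$, so the total is still $\tilde{O}(n^{\omega(1,q,1)})$; correctness is immediate since for $C[i,j]$ with $W_{C}(i,j)$ witnesses the scale $t$ with $2^{t}\approx W_{C}(i,j)$ provides a set meeting that witness set in exactly one element.

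I expect the only delicate points to be bookkeeping ones: checking that the derandomization of~\cite{AN96}, stated there for a universe of size $n$, still works for the possibly smaller witness universe $[n^{q}]$ (it does, by embedding $[n^{q}]\hookrightarrow[n]$), and that the reduction from integer matrix multiplication with $O(\log n)$-bit entries to fast algebraic rectangular matrix multiplication really costs only a $\tilde{O}(1)$ factor --- which is precisely why the bound carries a $\tilde{O}$ rather than an $O$. No ingredient beyond those already in~\cite{AN96} is needed; the statement is a parameter-tracking generalization.
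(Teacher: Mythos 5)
Your proposal is correct and follows exactly the route the paper intends: the paper offers no detailed proof of this fact, merely remarking that the Alon--Naor witness algorithm ``can be easily generalized'' to rectangular matrices with $n^{\omega}$ replaced by $n^{\omega(1,q,1)}$, and your write-up is precisely that generalization with the scales, isolating set families, derandomization, and the $\tilde{O}(n^{\omega(1,q,1)})$ cost per integer product all tracked explicitly. The only small detail worth adding is the standard verification step (checking that the integer recovered from $\hat{A}_{S}\hat{B}$ is indeed a witness, e.g.\ by also computing the count matrix or testing $A[i,k]\wedge B[k,j]$ directly), which costs $O(n^{2})$ extra and changes nothing.
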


Let $\alpha$ stand
for  $sup \{0
\le r \le 1 : \omega(1,r,1) = 2 + o(1) \}.$
The following recent lower bound on
$\alpha$ is due to Le Gall and Urrutia \cite{LGU}.

\begin{fact}\label{fact: ur}
The inequality $\alpha >  0.31389$ holds \cite{LGU}.
\end{fact}

Coppersmith \cite{C97} and Huang and Pan \cite{HP} proved the
following fact.

\begin{fact}
\label{fact:1}
The inequality 
$\omega(1,r,1) \le \beta(r)$ holds, where $\beta (r) = 2 + o(1)$ for $r
\in [0, \alpha]$ and $\beta(r) = 2 + \frac{\omega - 2}{1 - \alpha}
(r - \alpha) + o(1)$ for $r \in [\alpha, 1]$ \cite{C97,HP}.
\end{fact}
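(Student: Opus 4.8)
\emph{Proof proposal.}
The plan is to derive the stated piecewise-linear upper bound from just two structural properties of the function $r\mapsto\omega(1,r,1)$ on $[0,1]$: that it is non-decreasing, and that it is convex. Monotonicity is immediate, since an $n\times n^{r_1}$ by $n^{r_1}\times n$ product embeds into an $n\times n^{r_2}$ by $n^{r_2}\times n$ product (for $r_1\le r_2$) by padding with zeros, so $\omega(1,r_1,1)\le\omega(1,r_2,1)$; and $\omega(1,r,1)\ge 2$ because the output already has $n^2$ entries. The substantive ingredient is convexity, which I would obtain from Kronecker (tensor) products of bilinear algorithms: given algorithms for $\langle n,n^a,n\rangle$ and $\langle n,n^b,n\rangle$ (an $n\times n^a$ by $n^a\times n$ product, etc.) using $n^{\omega(1,a,1)+o(1)}$ and $n^{\omega(1,b,1)+o(1)}$ essential multiplications, the Kronecker product of $p$ copies of the first with $q$ copies of the second is a bilinear algorithm for $\langle N,N^{(pa+qb)/(p+q)},N\rangle$ with $N=n^{p+q}$ using $N^{(p\,\omega(1,a,1)+q\,\omega(1,b,1))/(p+q)+o(1)}$ multiplications. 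This yields $\omega(1,\lambda a+(1-\lambda)b,1)\le\lambda\,\omega(1,a,1)+(1-\lambda)\,\omega(1,b,1)$ for every rational $\lambda=p/(p+q)\in[0,1]$, hence for every real $\lambda\in[0,1]$ by density together with monotonicity; in particular $\omega(1,\cdot,1)$ is continuous on $(0,1)$.

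For $r\in[0,\alpha]$ I would argue as follows. By the definition of $\alpha$ as the supremum of the values $r'$ with $\omega(1,r',1)=2+o(1)$, we have $\omega(1,r',1)=2$ for all $r'<\alpha$, so continuity forces $\omega(1,\alpha,1)=2$. Monotonicity then gives $2\le\omega(1,r,1)\le\omega(1,\alpha,1)=2$ for all $r\le\alpha$, i.e. $\omega(1,r,1)=2+o(1)=\beta(r)$ on $[0,\alpha]$.

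For $r\in[\alpha,1]$ I would apply convexity to the two endpoints $a=\alpha$ and $b=1$, where we now know $\omega(1,\alpha,1)=2$ and $\omega(1,1,1)=\omega$. Writing $r=\lambda\alpha+(1-\lambda)\cdot 1$ with $\lambda=(1-r)/(1-\alpha)\in[0,1]$, convexity yields $\omega(1,r,1)\le\lambda\cdot 2+(1-\lambda)\,\omega=2+\frac{\omega-2}{1-\alpha}(r-\alpha)$, which is precisely $\beta(r)$ on $[\alpha,1]$. So $\omega(1,r,1)\le\beta(r)$ everywhere on $[0,1]$, as claimed.

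The main obstacle is the convexity step. One must verify carefully that the Kronecker-power construction actually produces a bilinear algorithm for a rectangular product of exactly the stated shape $\langle N,N^{(pa+qb)/(p+q)},N\rangle$ and with the claimed multiplication count, that the passage from rational to real interpolation parameters is legitimate, and that the various $o(1)$ terms (coming from the fact that $\omega(1,\cdot,1)$ is itself defined as an infimum of exponents) can be absorbed uniformly. Once convexity and continuity are in hand, the rest is routine bookkeeping with monotonicity and the definition of $\alpha$.
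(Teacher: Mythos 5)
The paper does not prove this statement at all: it is imported verbatim as a known result of Coppersmith and of Huang and Pan, with only the citations \cite{C97,HP} standing in for a proof. So there is no internal argument to compare yours against; what can be said is that your derivation is essentially the standard one underlying those references. The two ingredients you isolate are the right ones: monotonicity of $r\mapsto\omega(1,r,1)$ (trivial by zero-padding) and convexity, obtained by tensoring $p$ copies of $\langle n,n^{a},n\rangle$ with $q$ copies of $\langle n,n^{b},n\rangle$ to realize $\langle N,N^{(pa+qb)/(p+q)},N\rangle$ with $N=n^{p+q}$, using submultiplicativity of rank. This is the classical Lotti--Romani convexity argument, and linear interpolation between the points $(\alpha,2)$ and $(1,\omega)$ then gives exactly $\beta(r)$ on $[\alpha,1]$, while monotonicity plus the definition of $\alpha$ (and continuity of a convex function at the interior point $\alpha$) gives $\beta(r)=2+o(1)$ on $[0,\alpha]$. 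You correctly flag where the genuine technical content sits: the bookkeeping of the $o(1)$ terms, since $\omega(1,\cdot,1)$ is an infimum of exponents, and the passage from rational to real interpolation parameters; these are exactly the points the cited papers handle carefully (and note that for $r<\alpha$ one should argue via a witness $r'$ with $r<r'\le\alpha$ supplied by the supremum, as you do, rather than asserting $\omega(1,r',1)=2$ for all $r'<\alpha$ outright before invoking monotonicity). In short, your proposal is a correct reconstruction of the external proof that the paper merely cites.
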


It will be the most convenient
to formulate our quantum algorithms in the model of
Quantum Random Access Machine (QRAM) saving the reader
a lot of technical details of alternative
formulations in the quantum circuit model \cite{NC}.
Thus, our quantum algorithm can access any entry of
any input matrix $A$ in an access random manner (cf. \cite{LeGallisaac12,NV}). 
More precisely, following \cite{NV},
we assume that there is an oracle
$O_A$ which for $i,\ j \in [n]$ and $z\in \{ 0,\ 1\}^*,$
maps the state $|i\rangle|j\rangle|0\rangle|z\rangle$
into the state $|i\rangle|j\rangle|A[i,j]\rangle|z\rangle.$
When a whole table $T$ is stored in the random access
memory of QRAM such an  oracle $o_T$ corresponding to $T$
is implicit.  
We shall estimate the time complexity of our quantum
algorithms in the unit cost model, in particular we shall
assign unit cost to each call to an oracle.
In case the time complexity of our quantum algorithm
exceeds the size of the input
matrices, we may assume w.l.o.g. that the input
matrices are just read into the QRAM memory.

Following Le Gall \cite{LeGallisaac12}, we can generalize the definition
of a quantum algorithm for Boolean matrix product
to include the MW problem.

\begin{definition}
A quantum algorithm for witnesses of Boolean matrix product (or the MW problem)
is a quantum algorithm that when given access to oracles $O_A$ and
$O_B$ corresponding to Boolean matrices $A$ and $B$, computes
with probability at least $2/3$ all the non-zero entries of the
product $A\times B$ along with one
witness (the maximum witness, respectively) for each non-zero entry.
\end{definition}

Note the probability of at least $\frac 23$ can be enhanced to at least
$1-n^{-\gamma}$, for $\gamma \ge 1,$ by iterating the algorithm 
$O(\log n)$ times. When the size of the input is bounded by $poly(n)$,
one uses the term {\em almost certainly} for the latter probability.

In fact, all our quantum algorithms for MW but the output
sensitive one report also ``No'' for each zero entry
of the product matrix.

\section{Quantum search for the  maximum witness}
\junk{
The following procedure for finding the maximum
witness of an entry of the product matrix
in a set of indices combines
Grover's quantum search \cite{Gr} with a binary search.
\par
\vskip 4pt
\noindent
{\bf procedure} $MaxWitSet(A,B,i,j,S)$
\par
\noindent
{\em Input:} oracles corresponding to
Boolean $n\times n$ matrices $A,\ B,$
indices $i,\ j \in [n],$ an ordered subset $S$ of $[n].$
\par
\noindent
{\em Output:} if the $C[i,j]$ entry of
the Boolean product of $A$ and $B$ 
has a witness then its  maximum
witness in $S$ otherwise ``No''.
\begin{enumerate}
\item
{\bf if} $|S|\le 100$ {\bf then} 
by brute force find the maximum witness of
$C[i,j]$ in $S$ if any and return it 
otherwise return ``No'' 
\item
Split $S$ by a threshold integer $t$ into 
the left subset $S_1=\{r\in S|r\le t\}$
and the right subset $S_2=\{r \in S|r > t\}$ of sizes equal or
at most different by one.
\item
{\bf if} there exists a witness of
$C[i,j]$ in $S_2$  {\bf then}
$MaxWitSet(A,B,i,j,S_2)$ 
\\
{\bf else}
$MaxWitSet(A,B,i,j,S_1)$
\end{enumerate}

Importantly, the following lemma stating that $MaxWitSet(A,B,i,j,S)$
can be implemented in $\tilde{O}(\sqrt {|S|})$ time holds under the
assumption of a QRAM model.  In such a model, in order to apply
Grover's search over an 
integer interval $S,$ one can start with a
uniform superposition over the elements in $S$.

\begin{lemma} \label{lem: grover}
Let $\beta$ be a positive integer.
By repetitively using
Grover's quantum search,
$MaxWitSet(A,B,i,j,S)$ can be implemented
in $\tilde{O}(\beta \sqrt {|S|})$ time
such that it returns a correct answer
with probability at least $1-n^{-\beta}$.
\end{lemma}
\begin{proof}
We may assume that 
Grover's quantum search algorithm
returns a correct answer with at least
probability $c,$ where $c$ is a positive
constant smaller than $1$ \cite{Gr}.
By running Grover's quantum search algorithm
$O((\beta +1)\log n )$ times,
we can verify if a witness of
$C[i,j]$ in $S_2$ exists 
in Step 3  with 
probability at least $1-n^{-(\beta+1)}$
in $\tilde{O}(\beta \sqrt {|S_2|})$
time.
Since the recursion depth of the procedure
is logarithmic, all the verifications
are correct with probability at least
 $1-n^{-\beta}$.
The asymptotic running time
of 
$MaxWitSet(A,B,i,j,S)$ is dominated
by the time taken by the runs of the Grover's search
algorithm in Step 3 for a sequence of integer sub-intervals
$Z_1,Z_2,...., Z_q$ of $S$ such that $|Z_{i+1}|\le \frac {51} {100}
|Z_i|$ for $i=1,..,q-1$. 
Alternatively, we can always run the Grover's search algorithm
on the interval integer $[n]$ slightly modifying the black box
so it can take the value $1$ only on the appropriate sub-interval
of $[n]$. Since the number of sub-intervals is logarithmic
the upper bound
in the lemma follows in either case.
\qed
\end{proof}

In our algorithms for maximum witnesses
we shall use just $MaxWitSet(A,B,i,j,[n])$.
An alternative approach to implement 
$MaxWitSet(A,B,i,j,[n])$ is to}
 
One can find the maximum witness for
a given entry of the Boolean product
of two Boolean $n\times n$ matrices
in $\tilde{O}(\sqrt n)$ time with high probability
by recursively using Grover's quantum search \cite{Gr}
interleaved with a binary search.
However, the most convenient is to
use a specialized variant
of Grover's search due to D\"{u}rr and H{\o}yer \cite{A04,DH96}
for finding an entry of the minimum value in a table.
\begin{fact}\label{fact:dur}(D\"{u}rr and H{\o}yer \cite{DH96})
Let $T[k],$ $1\le k\le n$ be an unsorted table where
all values are distinct. Given an oracle for $T,$
the index $k$ for which $T[k]$ is minimum can be 
found by a quantum algorithm with probability
at least $\frac 12$ in $O(\sqrt n)$ time.
\end{fact}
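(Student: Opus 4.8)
The plan is to build the minimum-finding routine on top of Grover's quantum search used as a black box for locating a marked element when the number of marked elements is \emph{not} known in advance. Recall that this unknown-count variant, run on a predicate with $t\ge 1$ marked indices, returns one of them using $O(\sqrt{n/t})$ queries in expectation, and on a predicate with no marked index it halts and reports this within $O(\sqrt n)$ queries. The top-level algorithm maintains a \emph{threshold index} $y$, initialised uniformly at random in $[n]$. One round runs the unknown-count search on the predicate $f_y$ that is true exactly on those $k$ with $T[k]<T[y]$; if the search returns such a $k$, the algorithm updates $y\leftarrow k$, and otherwise leaves $y$ unchanged. Rounds are repeated until a fixed query budget $c\sqrt n$ is exhausted, and the final value of $y$ is output. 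Since $T[y]$ never increases and the entries are distinct, correctness reduces to showing that with probability at least $\tfrac12$ this budget suffices for $y$ to reach the unique minimiser.

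The first step is a combinatorial observation about ranks. Let $\mathrm{rk}(y)$ denote the rank of $T[y]$ among the $n$ values, so the marked set of $f_y$ has exactly $\mathrm{rk}(y)-1$ elements. Conditioned on success, Grover's search returns an essentially uniformly random marked index, hence a round that starts at rank $r$ moves $y$ to an index whose rank is essentially uniform on $\{1,\dots,r-1\}$. Thus the sequence of ranks performs a \emph{uniform descent}: it starts uniform on $[n]$, and each step lands uniformly below the current value. A harmonic-sum argument shows that such a descent reaches rank $1$ within $O(\log n)$ rounds in expectation and --- what the time bound will actually need --- that $q_i:=\Pr[\text{rank }i\text{ is ever attained by the threshold}]$ satisfies $q_i=O(1/i)$ for every $i$.

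The second step is the query accounting. A round performed while the threshold has rank $i$ costs $O(\sqrt{n/(i-1)})$ queries in expectation, and from rank $i$ the threshold next jumps to any given rank $j<i$ with probability $\tfrac{1}{i-1}$. Hence the expected total number of queries until $y$ becomes the minimiser is, up to constant factors, $\sum_{i=2}^{n}\sum_{j=1}^{i-1} q_i\cdot\tfrac{1}{i-1}\cdot\sqrt{n/(i-1)}$; the inner sum over $j$ cancels the factor $\tfrac1{i-1}$, leaving $\sum_{i\ge 2} q_i\sqrt{n/(i-1)}$, which by $q_i=O(1/i)$ is $O(\sqrt n\,\sum_i i^{-3/2})=O(\sqrt n)$. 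Choosing $c$ to be a suitable multiple of this constant and applying Markov's inequality then gives that with probability at least $\tfrac12$ the true minimum has been reached before the $O(\sqrt n)$-query budget runs out. A Grover call that fails is harmless: it merely wastes that round's queries and leaves $y$ unchanged, which is already subsumed by the expectation bound, and the single round spent at rank $1$ (where $f_y$ has no marked index) contributes only $O(\sqrt n)$ and is likewise absorbed.

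The step I expect to be the main obstacle is exactly this last estimate, namely $\sum_{i\ge 2} q_i\sqrt{n/(i-1)}=O(\sqrt n)$ rather than, say, $O(\sqrt n\log n)$. The difficulty is concentrated in the small-rank regime, where individual Grover calls are most expensive, so one must show that the uniform-descent process does not dwell there; this needs the full distribution of the descent, not merely the $O(\log n)$ bound on the number of rounds, together with the explicit constant-factor guarantees of the unknown-count Grover routine. Making all of these constants fit so that one fixed $O(\sqrt n)$ budget yields success probability $\ge\tfrac12$ is the delicate point; boosting $\tfrac12$ to $1-o(1)$, if wanted, is then a routine $O(\log n)$-fold repetition.
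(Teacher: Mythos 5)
The paper states this result as an imported Fact, citing D\"{u}rr and H{\o}yer \cite{DH96} without giving a proof, so there is no in-paper argument to compare against; your reconstruction is precisely the algorithm and analysis of that reference (random initial threshold, unknown-count Grover search on the ``strictly smaller'' predicate, the $1/r$ bound on the probability that the rank-$r$ element ever becomes the threshold, and the convergent sum $\sum_r r^{-1}\sqrt{n/(r-1)}=O(\sqrt n)$ followed by Markov's inequality). The argument is correct, including the point you flag as delicate --- the expectation really is $O(\sqrt n)$ and not $O(\sqrt n\log n)$ because the extra factor $1/r$ makes the series converge.
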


Using this fact, we can design the following
procedure $MaxWit(A,B,i,j)$
returning the maximum witness  of the entry
$C[i,j]$ (if any) of the product
$C$ of two Boolean $n\times n$ matrices
$A$ and $B.$
\par
\vskip 4pt
\noindent
{\bf procedure} $MaxWit(A,B,i,j)$
\par
\noindent
{\em Input:} oracles corresponding to
a Boolean $p\times q$ matrix $A$
and a Boolean $q \times r$ matrix $B$,
and indices $i \in [p],\ j \in [r].$
\par
\noindent
{\em Output:} if the $C[i,j]$ entry of
the Boolean product $C$ of $A$ and $B$ 
has a witness then its  maximum
witness in $[q]$ otherwise ``No''.
\begin{enumerate}
\item $n\leftarrow \max \{ p,q,r\}$
\item
Define an oracle for a virtual, one-dimensional
integer table 
$T[k],$ $k\in [q]$ by
$T[k]=2n- A[i,k]B[k,j]n -k$.
\item
Iterate $O(\log n)$ times
the algorithm of D\"{u}rr and H{\o}yer 
for $T$ 
and set $k'$ to the index 
minimizing $T.$
\item
If $T[k']< n$ then return $k'$
as the maximum witness otherwise
return ``No''.
\end{enumerate}

\begin{lemma} \label{lem: grover}
Let $\beta$ be a positive integer.
By repetitively using
the algorithm of D\"{u}rr and H{\o}yer,
$MaxWit(A,B,i,j)$ can be implemented
in $\tilde{O}(\beta \sqrt {n})$ time
such that it returns a correct answer
with probability at least $1-n^{-\beta}$.
\end{lemma}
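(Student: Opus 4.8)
The plan is to verify two things: that the virtual table $T$ defined in Step~2 is rigged so that its (unique) minimum encodes exactly the desired answer, and that amplifying the success probability of the D\"urr--H\o yer procedure, combined with the unit-cost oracle model, yields the claimed bound. So first I would check that all values $T[k]=2n-A[i,k]B[k,j]n-k$ are distinct, which is needed before Fact~\ref{fact:dur} can be invoked: if $k$ is a witness for $C[i,j]$ then $T[k]=n-k\in\{n-q,\dots,n-1\}$, while if $k$ is not a witness then $T[k]=2n-k\in\{2n-q,\dots,2n-1\}$; since $q\le n$ these two ranges are disjoint, and within each range the $-k$ summand makes the values pairwise distinct. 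Moreover, on the witness indices $T$ is strictly decreasing in $k$, so the global minimizer of $T$ is the maximum witness whenever a witness exists, and otherwise it is some non-witness index whose $T$-value is at least $2n-q\ge n$. Hence the test ``$T[k']<n$'' in Step~4 returns the maximum witness exactly when one exists and ``No'' otherwise --- provided $k'$ is the true minimizer of $T$.

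To guarantee the latter, I would run the D\"urr--H\o yer algorithm $c(\beta+1)\lceil\log n\rceil$ times for a suitable constant $c$ (this is what ``iterate $O(\log n)$ times'' in Step~3 should be read as, with the dependence on $\beta$ made explicit), and keep the index minimizing $T$ among all returned candidates. By Fact~\ref{fact:dur}, each run returns the true minimizer independently with probability at least $\tfrac12$, so the probability that no run does is at most $2^{-c(\beta+1)\lceil\log n\rceil}\le n^{-\beta}$ for $c$ large enough; and whenever at least one run returns the true minimizer, the argmin over all returned candidates is the true minimizer as well, since the entries of $T$ are fixed numbers. This is the only genuinely delicate point: taking the best candidate over repetitions is safe precisely because $T$ is a fixed table, so a ``wrong'' run cannot beat a ``right'' one; I would state this explicitly rather than appeal to a generic majority-vote argument.

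For the running time, I would note that one evaluation of the oracle for $T$ is realized by a single query to $O_A$, a single query to $O_B$, and $O(\log n)$ reversible arithmetic operations on $O(\log n)$-bit integers, i.e.\ $\tilde O(1)$ cost in the QRAM unit-cost model. Hence a single D\"urr--H\o yer run over a table of size $q\le n$ costs $O(\sqrt q)=O(\sqrt n)$ up to the $\tilde O(1)$ per-query overhead, and the $O(\beta\log n)$ repetitions give total time $\tilde O(\beta\sqrt n)$, as claimed. I expect the main obstacles to be purely expository: pinning down the amplification constants so the bound $1-n^{-\beta}$ holds across the logarithmic number of runs, and justifying that the oracle for the virtual table $T$ is implementable with only polylogarithmic overhead in the chosen model; both are routine but should be written out carefully.
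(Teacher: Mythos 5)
Your proposal is correct and follows essentially the same route as the paper's own proof: verify that the entries of $T$ are pairwise distinct so that Fact~\ref{fact:dur} applies, observe that the threshold test $T[k']<n$ decodes the maximum witness or the absence of any witness, and amplify the success probability to $1-n^{-\beta}$ by $O(\beta\log n)$ independent runs of the D\"urr--H\o yer search, each costing $\tilde{O}(\sqrt{n})$ in the QRAM oracle model. Your write-up is merely more explicit than the paper's (the disjoint value ranges, the safety of taking the argmin over repetitions, and the $\tilde{O}(1)$ implementation of the oracle for $T$), but no new idea is involved.
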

\begin{proof}
To begin with observe that for
$k,\ k' \in [q],$ if $k \neq k'$ 
then $T[k]\neq T[k'].$
This obviously holds for $k,\  k' \in [q]$ if
$A[i, k]B[k,j]=A[i, k']B[k',j]$ 
as well when $A[i, k]B[k,j]\neq A[i, k']B[k',j]$.
Furthermore, the value of $T[k]$ can
be computed with the help of the oracles
for $A$, $B$ in constant time 
in the QRAM model.
Next, suppose that the minimum value of $T$
is achieved for the index $k'$.
It is easily seen that if $T[k']<n$ then $k'$
is the maximum witness of $C[i,j]$ 
and otherwise $C[i,j]$ does not have
any witness.
By running the minimum search algorithm
of D\"{u}rr and H{\o}yer
$O(\beta \log n )$ times,
we can identify the maximum witness
of $C[i,j]$ with 
probability at least $1-n^{-\beta}$
in $\tilde{O}(\beta \sqrt {n})$
time.
\qed
\end{proof}
\subsection{A straightforward quantum algorithm for MW}

By Lemma \ref{lem: grover}, a straightforward $\tilde{O}(n^{2.5})$-time 
method for  MW is just to run the procedure
 $MaxWit(A,B,i,j)$ with appropriately large
constant $\beta$ for each
entry $C[i,j]$ of  the product matrix $C.$
See Algorithm 1 for a pseudo-code of this method.
\par
\vskip 3pt
\noindent
{\bf Algorithm 1}
\par
\noindent
{\em Input:} 
oracles corresponding to Boolean $n\times n$ matrices $A,\ B.$
\par
\noindent
{\em Output:} maximum witnesses for all non-zero
entries of the Boolean product of $A$ and $B$.
and ``No''  for all zero entries of the product.
\par
\noindent
$\ \ \ \ $ {\bf for} all $i,\ j\in [n]$ {\bf do} 
\par
\noindent
$\ \ \ \ $ $MaxWit(A,B,i,j)$ 
\par
\vskip 3pt
\noindent
Note that Algorithm 1 returns also ``No'' 
for zero entries of $C.$

By 
Lemma \ref{lem: grover}
with sufficiently large $\beta$, 
we obtain immediately the following theorem.
\begin{theorem}
Algorithm 1 solves the MW problem 
in $\tilde{O}(n^{2.5})$ time.
\end{theorem}

\subsection{An output-sensitive quantum algorithm for MW}

By adding as a preprocessing
a known output-sensitive quantum algorithm
for the Boolean product of the matrices $A$ and $B,$
Algorithm 1 can be transformed
into an output-sensitive one.
\par
\vskip 4pt
\noindent
{\bf Algorithm 2}
\par
\noindent
{\em Input:} 
oracles corresponding to Boolean $n\times n$ matrices $A,\ B.$
\par
\noindent
{\em Output:} maximum witnesses for all non-zero
entries of the Boolean product of $A$ and $B$.
\noindent
\begin{enumerate}
\item
Run an output-sensitive
quantum algorithm for the Boolean product $C$ of $A$ and $B.$
\item
{\bf for} all non-zero entries $C[i,j]$ {\bf do}
\newline
$MaxWit(A,B,i,j)$
\end{enumerate}
\junk{
\par
\noindent
{\bf for} all $i,\ j\in [n]$ {\bf do} 
\newline
$MaxWit(A,B,i,j)$
\par
\vskip 4pt
\noindent
By 
Lemma \ref{lem: grover}
with sufficiently large $\beta$, 
we obtain immediately the following theorem.
\begin{theorem}
Algorithm 1 solves the MW problem 
in $\tilde{O}(n^{2.5})$ time.
\end{theorem}}
\begin{theorem}
The MW problem
can be solved by a quantum 
algorithm in $\tilde{O}(n\sqrt s+ s\sqrt n)$ time, where $s$ is the number
of non-zero  entries in the product.
 \end{theorem}
\begin{proof}
Consider Algorithm 2.
Due to Step 1, the procedure $MaxWit$ is called
only for non-zero entries of $C$.
Hence, the total time
taken by Step 2 is $\tilde{O}(s\sqrt n)$ by Lemma \ref{lem: grover}
with any fixed $\beta$.
It is sufficient now to plug in the output-sensitive
quantum algorithm for Boolean matrix
product due to Le Gall
\cite{LeGallisaac12} running in $\tilde{O}(n\sqrt s + s\sqrt n)$ time
to implement Step 1. In order to obtain enough large probability
of the correctness of the whole output, we can iterate the plug in algorithm
a logarithmic number of times and pick  enough large $\beta$
in Lemma \ref{lem: grover}.
We obtain the
output-sensitive upper bound claimed in the theorem.
\qed
\end{proof}

Interestingly enough, the asymptotic time complexity of
our output-sensitive quantum algorithm for MW coincides with 
that of the output-sensitive quantum algorithm for Boolean matrix product
due Le Gall \cite{LeGallisaac12,LeGallsoda12}.

\subsection{An input-sensitive quantum algorithm for MW}

We can also refine the straightforward quantum
algorithm for MW in order to obtain
an input-sensitive quantum algorithm for MW.
\par
\vskip 4pt
\noindent
{\bf Algorithm 3}
\par
\noindent
{\em Input:} Boolean $n\times n$ matrices $A,\ B.$
\par
\noindent
{\em Output:} maximum witnesses for all non-zero
entries of the Boolean product of $A$ and $B$
and ``No''  for all zero entries of the product.
\par
\noindent
\begin{enumerate}
\item
For each column $j$ of the matrix $B$ 
compute the sequence $K_j$ of indices $k \in [n]$ 
in decreasing order such that
$B[k,j]=1$ by using the oracle for the matrix $B.$
Construct a one dimensional integer table $S_j$
of length $|K_j|$ such that for $s \in [|K_j|]$,
$S_j[s]$ is the $s$-th largest
element in $K_j$.
\item
{\bf for} all $i,\ j\in [n]$ {\bf do}
\begin{enumerate}
\item
Define an oracle for a virtual,
 one-dimensional integer table $T_{i,j}$ of length $|K_j|$
such that for $s \in [|K_j|]$,
$T_{i,j}[s]=2n-A[i, S_j[s]]n-S_j[s]$.
(The value of $T_{i,j}[s]$ can be retrieved
in constant time by using the oracle for the matrix $A$
and the table $S_j.$)
\item
Iterate $O(\log n)$ times
the algorithm of D\"{u}rr and H{\o}yer 
for $T_{i,j}$ 
and set $s'$ to the index 
minimizing $T_{i,j}.$
\item
If $T_{i,j}[s']< n$ then return $S_j[s']$ (i.e.,$n-T_{i,j}[s']$)
as the maximum witness for $C[i,j]$ otherwise
return ``No'' for $C[i,j]$.
\end{enumerate}
\end{enumerate}
\junk{
The correctness of Algorithm 2 follows from
the definition of the tables
$T_{i,j}$, in particular the fact that each of them  has distinct values.

\begin{lemma}\label{lem: sparse}
If the matrix $B$ has $m$ non-zero entries
then Algorithm 2 runs
 in
$\tilde{O}(n^2+n^{1.5}m^{0.5})$ time.
\end{lemma}
\begin{proof}
The first step can be easily done in $\tilde{O}(n^2)$ time.
In the second step,
computing the maximum witnesses for
the entries in the $i$-th row of the 
product matrix takes 
$\tilde{O}(\sum_{j=1}^n\sqrt {|K_j|})$ time by Fact \ref{fact:dur}.
Since $\sum_{j=1}^n|K_j|\le m_2$ and the arithmetic mean
does not exceed the quadratic one,
we obtain
$\sum_{j=1}^n\sqrt {|K_j|}\le n\sqrt {\frac {m_2}n}.$
Consequently, Algorithm 2 runs in
$\tilde{O}(n^2+n^2\sqrt {\frac {m_2}n})$ time.
\qed
\end{proof}

As in case of Algorithm 1, we can pick 
enough large constant at $\log n$ in
the upper bound on the number of iterations
of the algorithm of D\"{u}rr and H{\o}yer in order
to guarantee that the whole output of Algorithm 2 is
correct with probability at least $\frac 23 $.
Hence, by Lemma \ref{lem: sparse} and $A\times B=(B^t\times A^t)^t,$
we obtain the following theorem.}

An analysis of Algorithm 1 yields the following theorem.

\begin{theorem}
The MW problem for the Boolean product
of two Boolean
 $n\times n$ matrices,
with $m_1$ and $m_2$ non-zero
entries respectively, admits
a quantum algorithm running
in $\tilde{O}(n^2+n^{1.5}\sqrt {\min \{ m_1,m_2\}})$ time.
\end{theorem}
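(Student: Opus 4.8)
The plan is to verify \textbf{Algorithm 3} directly and then symmetrize the resulting bound using $A\times B=(B^t\times A^t)^t$. For correctness, fix $i,j\in[n]$. By construction $B[S_j[s],j]=1$ for every $s\in[|K_j|]$, so the witnesses of $C[i,j]$ are precisely the indices $S_j[s]$ with $A[i,S_j[s]]=1$; for such $s$ one has $T_{i,j}[s]=2n-n-S_j[s]<n$, whereas $A[i,S_j[s]]=0$ forces $T_{i,j}[s]=2n-S_j[s]\ge n$ since $S_j[s]\le n$. Hence $\min_s T_{i,j}[s]<n$ exactly when $C[i,j]$ has a witness, and in that case the minimizing index $s'$ satisfies $A[i,S_j[s']]=1$ and maximizes $S_j[s']$, i.e. $S_j[s']$ is the maximum witness returned in Step~2(c). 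I also need the precondition of Fact~\ref{fact:dur} that the entries of $T_{i,j}$ are pairwise distinct: for $s\neq s'$, if the two $A$-values agree then $T_{i,j}[s]-T_{i,j}[s']=S_j[s']-S_j[s]\neq 0$ because $S_j$ is strictly decreasing, and if they differ the difference is $\pm n$ shifted by a quantity of absolute value at most $n-1$, hence nonzero --- the same argument as in the proof of Lemma~\ref{lem: grover}.

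For the running time, Step~1 reads $B$ and sorts each column support, which is $\tilde{O}(n^2)$. In Step~2, for each pair $(i,j)$ the virtual table $T_{i,j}$ has length $|K_j|$ and every entry is computable in $O(1)$ QRAM time from the oracle for $A$ together with $S_j$; thus by Fact~\ref{fact:dur} one run of D\"{u}rr--H{\o}yer costs $O(\sqrt{|K_j|})$ and the $O(\log n)$ repetitions cost $\tilde{O}(\sqrt{|K_j|})$. Summing,
\[
\sum_{i=1}^{n}\sum_{j=1}^{n}\tilde{O}\!\left(\sqrt{|K_j|}\right)=\tilde{O}\!\left(n\sum_{j=1}^{n}\sqrt{|K_j|}\right),
\]
and since $\sum_{j=1}^n |K_j| = m_2$ and the arithmetic mean does not exceed the quadratic mean, $\sum_{j=1}^n\sqrt{|K_j|}\le\sqrt{n\,m_2}$. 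So Step~2 takes $\tilde{O}(n^{1.5}\sqrt{m_2})$ and Algorithm~3 runs in $\tilde{O}(n^2+n^{1.5}\sqrt{m_2})$.

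It remains to control the error and to obtain $\min\{m_1,m_2\}$. Running D\"{u}rr--H{\o}yer $\Theta(\log n)$ times with a suitable constant makes each of the $n^2$ minimum searches err with probability at most, say, $n^{-3}$, so a union bound over all pairs leaves the entire output correct with probability at least $2/3$ (in fact almost certainly), the extra logarithmic factor being absorbed into $\tilde{O}(\ )$. Applying the same algorithm to $B^t\times A^t$, whose right-hand factor $A^t$ has $m_1$ non-zero entries, and transposing the result, gives the bound with $m_1$ in place of $m_2$; running the cheaper of the two yields $\tilde{O}(n^2+n^{1.5}\sqrt{\min\{m_1,m_2\}})$.

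There is no deep obstacle here: the quantitative heart of the argument is just the power-mean inequality applied to the column-support sizes $|K_j|$. The two points needing care are verifying the distinctness precondition of Fact~\ref{fact:dur} for the shifted tables $T_{i,j}$, and --- crucially --- running each minimum search over an index set of size $|K_j|$ rather than $n$; this is exactly what pre-sorting the supports in Step~1 and indexing $T_{i,j}$ by $[|K_j|]$ buys us, and without it the bound would collapse back to the $\tilde{O}(n^{2.5})$ of Algorithm~1.
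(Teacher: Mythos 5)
Your proposal is correct and follows essentially the same route as the paper's own proof: verify Algorithm~3 (including the distinctness of the entries of $T_{i,j}$ needed for Fact~\ref{fact:dur}), bound Step~2 by $\tilde{O}(n\sum_j\sqrt{|K_j|})\le\tilde{O}(n^{1.5}\sqrt{m_2})$ via the power-mean inequality, and symmetrize with $A\times B=(B^t\times A^t)^t$ to get $\min\{m_1,m_2\}$. Your write-up merely spells out the correctness and error-amplification details that the paper leaves implicit.
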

\begin{proof}
Consider Algorithm 3. Its correctness follows from
the definition of the tables
$T_{i,j}$, in particular the fact that each of them  has distinct values.
Let us estimate the time complexity
of Algorithm 3. We may assume w.l.o.g. that the number
of non-zero entries in the matrix $B$ is $m_2.$
Steps 1, 2(a) and 2(c) can be easily done in $\tilde{O}(n^2)$ total time.
In Step 2(b),
computing the maximum witnesses for
the entries in the $i$-th row of the 
product matrix takes 
$\tilde{O}(\sum_{j=1}^n\sqrt {|K_j|})$ time by Fact \ref{fact:dur}.
Since $\sum_{j=1}^n|K_j|\le m_2$ and the arithmetic mean
does not exceed the quadratic one,
we obtain
$\sum_{j=1}^n\sqrt {|K_j|}\le n\sqrt {\frac {m_2}n}.$
Consequently, Algorithm 3 runs in
$\tilde{O}(n^2+n^2\sqrt {\frac {m_2}n})$ time.

As in case of Algorithms 1 and 2, we can pick 
enough large constant at $\log n$ in
the upper bound on the number of iterations
of the algorithm of D\"{u}rr and H{\o}yer in order
to guarantee that the whole output of Algorithm 3 is
correct with probability at least $\frac 23 $.
Hence, by the time analysis of
Algorithm 3 and $A\times B=(B^t\times A^t)^t,$
we obtain the theorem.
\qed
\end{proof}
\junk{\par
\noindent
{\bf An output-sensitive algorithm for MW}
We can also refine Algorithm 1 in order to obtain
an output sensitive quantum algorithm for MW.
\par
\vskip 4pt
\noindent
{\bf Algorithm 3}
\par
\noindent
{\em Input:} 
oracles
corresponding to Boolean $n\times n$ matrices $A,\ B.$
\par
\noindent
{\em Output:} maximum witnesses for all non-zero
entries of the Boolean product of $A$ and $B$.
\noindent
\begin{enumerate}
\item
Run an output-sensitive
quantum algorithm for the Boolean product $C$ of $A$ and $B;$
\item
{\bf for} all non-zero entries $C[i,j]$ {\bf do}
\newline
$MaxWit(A,B,i,j)$
\end{enumerate}

\begin{theorem}
The MW problem
can be solved by a quantum 
algorithm in $\tilde{O}(n\sqrt s+ s\sqrt n)$ time, where $s$ is the number
of non-zero  entries in the product.
 \end{theorem}
\begin{proof}
Consider Algorithm 3.
Due to Step 1, the procedure $MaxWit$ is called
only for non-zero entries of $C$.
Hence, the total time
taken by Step 2 is $\tilde{O}(s\sqrt n)$ by Lemma \ref{lem: grover}
with any fixed $\beta$.
It is sufficient now to plug in the output-sensitive
quantum algorithm for Boolean matrix
product due to Le Gall
\cite{LeGallisaac12,LeGallsoda12} running in $\tilde{O}(n\sqrt s + s\sqrt n)$ time
to implement Step 1. In order to obtain enough large probability
of the correctness of the whole output, we can iterate the plug in algorithm
a logarithmic number of times and pick enough large $\beta$
in Lemma \ref{lem: grover}.
We obtain the
output-sensitive upper bound claimed in the theorem.
\qed
\end{proof}

Interestingly enough, the asymptotic time complexity of
our output-sensitive algorithm for MW coincides with 
that of the output-sensitive algorithm for Boolean matrix product
due Le Gall \cite{LeGallisaac12,LeGallsoda12}.}

\section{The fastest method: combining rectangular Boolean matrix multiplication with 
quantum search}
 \begin{figure}
\begin{center}
\includegraphics[scale=0.2 ]{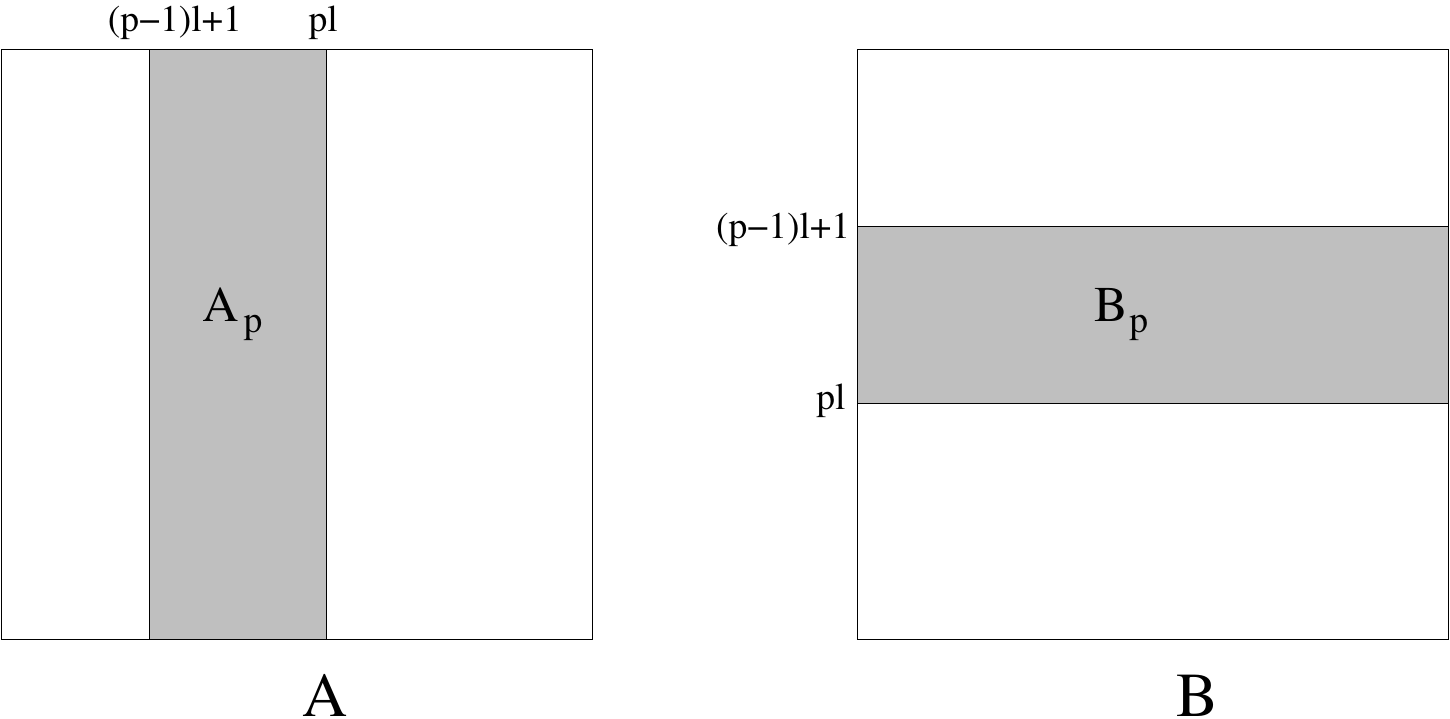}
\end{center}
\caption{The input matrices $A$ and $B$ are divided into 
vertical and horizontal strip submatrices $A_p$ and $B_p,$
respectively, in Algorithm 4.}
\end{figure}
The best known algorithm for MW from \cite{CK07}
relies on  the multiplication of rectangular submatrices of the input
matrices. We can combine this idea
 with that of our procedure $MaxWit$ based on 
the quantum search for the minimum in order to obtain our fastest quantum
algorithm for MW.
\vfill
\newpage
\noindent
{\bf Algorithm 4}
\par
\noindent
{\em Input:}
oracles
corresponding to Boolean $n\times n$ matrices $A,\ B,$ and
a parameter $\ell \in [n].$
\par
\noindent
{\em Output:} maximum witnesses for all non-zero
entries of the Boolean product of $A$ and $B,$
and ``No''  for all zero entries of the product.
\par
\noindent
\begin{enumerate}
\item Divide $A$ into $\lceil n/\ell \rceil$ vertical strip
submatrices $A_1,...,A_{\lceil n/\ell \rceil}$ of width $\ell$
with the exception of the last one that can have width $\le \ell .$
\item Divide $B$ into $\lceil n/\ell\rceil$ horizontal strip
submatrices $B_1,...,B_{\lceil n/\ell \rceil}$ of width $\ell$
with the exception of the last one that can have width $\le \ell.$
\item {\bf for} $p\in [\lceil n/\ell \rceil ]$ compute the Boolean product $C_p$
 of $A_p$ and $B_p$
\item {\bf for} all $i,\ j\in [n]$ {\bf do}
\begin{enumerate}
\item Find the largest $p$ such that $C_p[i,j]=1$ or set $p=0$ if it does not exist.
\item {\bf if} $p>0$ {\bf then} return $\ell(p-1)+
MaxWit(A_p,B_p,i,j)$ {\bf else} return ``No''
\end{enumerate}
\end{enumerate}
\begin{figure}
[h]
\begin{center}
\includegraphics[scale=0.32]{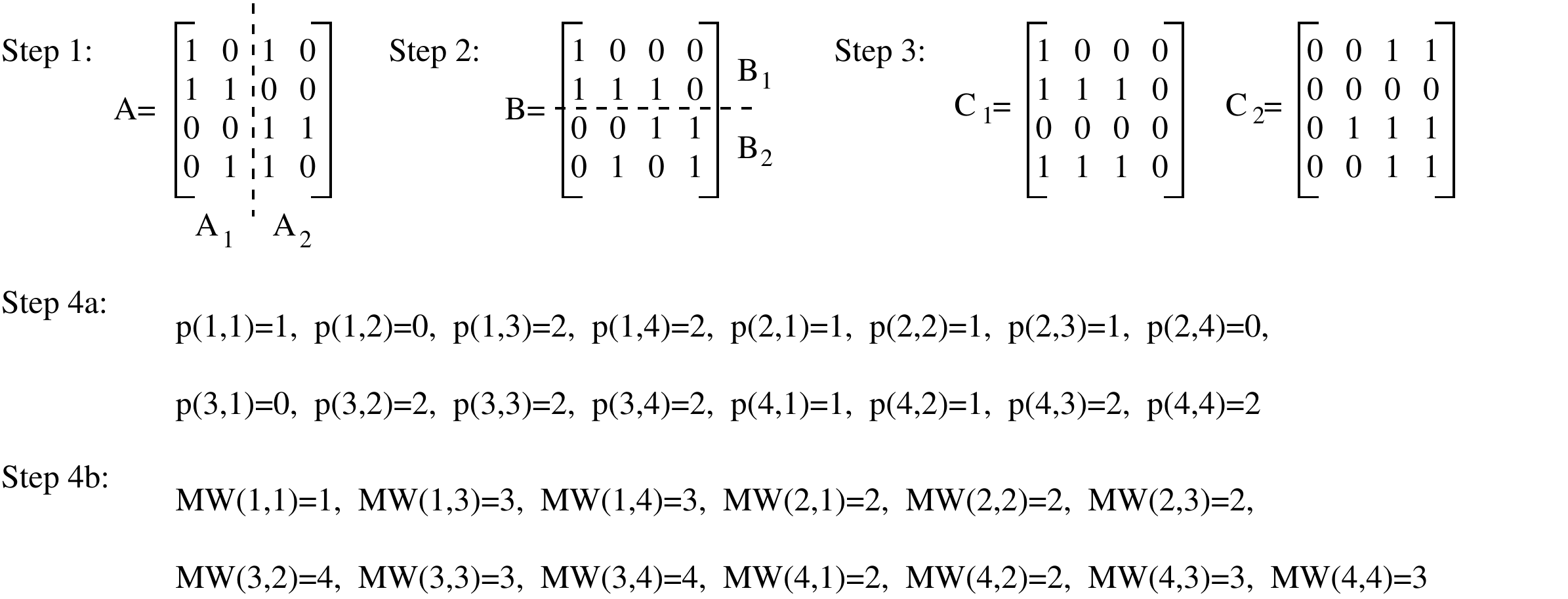}
\end{center}
\caption{An example illustrating Algorithm 4.}
\end{figure}

\begin{lemma} \label{lem: timealg4}
Algorithm 4 runs in time
$\tilde{O}((n/\ell)n^{\omega(1,\log_n\ell,1)}+n^3/\ell + n^2\sqrt \ell)$.
\end{lemma}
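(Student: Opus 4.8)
The plan is to account separately for the three phases of Algorithm 4: the rectangular matrix multiplications in Step 3, the bookkeeping in Steps 1, 2, and 4(a), and the quantum minimum searches in Step 4(b). First I would observe that each $A_p$ has size $n \times \ell$ and each $B_p$ has size $\ell \times n$, so computing one Boolean product $C_p$ costs $\tilde O(n^{\omega(1,\log_n \ell,1)})$ time (a Boolean product reduces to an integer product, and a $\log n$ factor absorbs the rounding of the strip width; note $\log_n \ell$ is exactly the exponent $q$ with $\ell = n^q$). Since there are $\lceil n/\ell \rceil$ such products, Step 3 contributes $\tilde O((n/\ell)\,n^{\omega(1,\log_n \ell,1)})$.

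Next I would bound Steps 1, 2, and 4(a). Writing out the strip decompositions and storing the $\lceil n/\ell \rceil$ matrices $C_p$ takes $\tilde O((n/\ell) \cdot n^2 / \text{(something)})$ — more precisely, each $C_p$ is an $n \times n$ Boolean matrix, but we only need, for each pair $(i,j)$, the largest index $p$ with $C_p[i,j]=1$; this can be extracted in one pass over the $C_p$'s, costing $O((n/\ell)\cdot n^2)$ time in the worst case, i.e. $\tilde O(n^3/\ell)$. (If one is more careful one stores only a single $n\times n$ table of "largest good $p$" and updates it as each $C_p$ is produced, which is still $\Theta(n^3/\ell)$.) This is the source of the $n^3/\ell$ term. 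Steps 1 and 2 themselves are clearly dominated by this.

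Finally, Step 4(b) calls $MaxWit(A_p, B_p, i, j)$ once for each of the $n^2$ entries $(i,j)$, and each such call runs on matrices whose inner dimension is at most $\ell$, so by Lemma \ref{lem: grover} (with a fixed constant $\beta$ chosen large enough that the union bound over all $n^2$ calls still gives overall success probability $\ge 2/3$) each call costs $\tilde O(\sqrt \ell)$ time. Summing over the $n^2$ entries gives $\tilde O(n^2 \sqrt \ell)$, the third term. Adding the three contributions yields the claimed bound $\tilde O((n/\ell)\,n^{\omega(1,\log_n \ell,1)} + n^3/\ell + n^2\sqrt \ell)$. The only mildly delicate point is the probability bookkeeping in the last phase — one must be sure that amplifying each $MaxWit$ call to error $n^{-\beta}$ with $\beta$ a suitable constant keeps the total running time in the $\tilde O(\cdot)$ class while driving the union-bound failure probability below $1/3$; since the number of calls is $n^2 = \mathrm{poly}(n)$ this is routine, and $\beta = 3$ already suffices. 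I expect no real obstacle here; the lemma is essentially a time-accounting exercise, and the subsequent optimization over $\ell$ (balancing the three terms) is what does the interesting work in the theorem that follows.
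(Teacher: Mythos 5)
Your proposal is correct and follows essentially the same accounting as the paper's own proof: Step 3 contributes $\tilde{O}((n/\ell)n^{\omega(1,\log_n\ell,1)})$, the scan for the largest $p$ in Step 4(a) contributes $O(n^3/\ell)$, and the $n^2$ calls to $MaxWit$ on strips of inner dimension $\ell$ contribute $\tilde{O}(n^2\sqrt{\ell})$ by Lemma \ref{lem: grover}. Your extra remarks on the probability amplification are consistent with how the paper handles this (it defers the choice of a sufficiently large $\beta$ to the derivation of Theorem \ref{theo: wit}), so nothing is missing.
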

\begin{proof}
Steps 1, 2, take $O(n^2)$ time.
Step 3 requires $O((n/\ell)n^{\omega(1,\log_n \ell,1)})$ time.
Step 4(a) takes $O(n^2\times n/\ell)$ time totally.
Finally, Step 4(b) requires $\tilde{O}(n^2\sqrt \ell)$ time
totally by Lemma \ref{lem: grover}.
\qed
\end{proof}
\junk{
By setting $\ell=n^{2/3}$ in Lemma \ref{lem: grover},
we obtain the following theorem.

\begin{theorem}
The maximum witness problem
for the Boolean product of
two Boolean  $n\times n$ matrices
can be solved 
by a quantum algorithm  
 in time
$\tilde{O}(n^{1/3}n^{\omega(1,(2/3)\ln n,1)}+n^{2+1/3})$.
\end{theorem}

By Lemma \ref{lem: grover} with sufficiently
large $\beta$
and Lemma \ref{lem: timealg4},
we obtain the following theorem. 

\begin{theorem}\label{theo: notrade}
The maximum witness problem
for the Boolean product of
two Boolean  $n\times n$ matrices
can be solved 
by a quantum algorithm  
 in time 
$\tilde{O}((n/\ell)n^{\omega(1,\log_n \ell,1)}+n^3/\ell + n^2\sqrt \ell)$,
where $\ell in [n].$
\end{theorem}
\noindent
See also Theorem \ref{theo: tradeoffs} in Appendix
for a generalization of Theorem \ref{theo: notrade} to
include trade-offs between 
preprocessing and answering a maximum witness query.}
\junk{By Lemma \ref{lem: grover} with sufficiently
large $\beta$
and the time analysis in Lemma \ref{lem: timealg4},
we can obtain trade-offs between 
preprocessing and answering a maximum witness query
times depending on $\ell .$ See Theorem \ref{theo: tradeoffs} in Appendix

\begin{theorem}\label{theo: tradeoffs}
Let $C$ denote the Boolean product
of two Boolean $n\times n$ matrices,
and let $i,\ j$ be any integers in $[n].$
Without any preprocessing,
a maximum witness query for $C[i,j]$
can be answered in $\tilde{O}(\sqrt n )$ time.
Let $\ell$ be a parameter  in $[n].$
After an $O((n/\ell)n^{\omega(1,\log_n \ell,1)})$
time preprocessing (Steps 1,2,3 in Algorithm 4),
a maximum witness query for $C[i,j]$
can be answered in $\tilde{O}(n/\ell+ \sqrt \ell)$
time. After an 
$O((n/\ell)n^{\omega(1,\log_n \ell,1)}+n^{3}/\ell)$
time preprocessing (Steps 1,2,3, 4(a) in Algorithm 4),
a maximum witness query for $C[i,j]$
can be answered in $\tilde{O}(\sqrt \ell )$ time.
Finally, after running the whole Algorithm 4
 in time
$\tilde{O}((n/\ell)n^{\omega(1,\log_n \ell,1)}+n^3/\ell + n^2\sqrt \ell)$,
a maximum witness query for $C[i,j]$
can be answered in $O(1)$ time.
\end{theorem}}
By Lemma \ref{lem: grover} with sufficiently
large $\beta$
and the time analysis in Lemma \ref{lem: timealg4},
we obtain the following trade-offs between
preprocessing time and answering a maximum witness query
time depending on $\ell .$
\junk{we obtain the following trade-offs between 
preprocessing and answering a maximum witness query.}
\begin{theorem}\label{theo: tradeoffs}
Let $C$ denote the Boolean product
of two Boolean $n\times n$ matrices,
and let $i,\ j$ be any integers in $[n].$
Without any preprocessing,
a maximum witness query for $C[i,j]$
can be answered in $\tilde{O}(\sqrt n )$ time.
Let $\ell$ be a parameter  in $[n].$
After an $O((n/\ell)n^{\omega(1,\log_n \ell,1)})$
time preprocessing (Steps 1,2,3 in Algorithm 4),
a maximum witness query for $C[i,j]$
can be answered in $\tilde{O}(n/\ell+ \sqrt \ell)$
time. After an 
$O((n/\ell)n^{\omega(1,\log_n \ell,1)}+n^{3}/\ell)$
time preprocessing (Steps 1,2,3, 4(a) in Algorithm 4),
a maximum witness query for $C[i,j]$
can be answered in $\tilde{O}(\sqrt \ell )$ time.
Finally, after running the whole Algorithm 4
 in time
$\tilde{O}((n/\ell)n^{\omega(1,\log_n \ell,1)}+n^3/\ell + n^2\sqrt \ell)$,
a maximum witness query for $C[i,j]$
can be answered in $O(1)$ time.
\end{theorem}
\subsection{Finding $\ell$ minimizing the total time.}
Recall 
that $\omega(1,r,1)$ denotes the exponent of the multiplication of an
$n \times n^r$ matrix by an $n^r \times n$ matrix.
By Lemma \ref{lem: timealg4},
the total time taken by Algorithm 4 for maximum witnesses is
\begin{displaymath}
   \tilde{O}((n/\ell) \cdot n^{\omega (1, \log_n\ell, 1)}+ n^3/\ell + n^2 \, \sqrt{\ell}).
\enspace.
\end{displaymath}
By setting $r$ to $\log_n\ell$ our upper bound transforms to
$\tilde{O}(n^{1-r+ \omega (1, r , 1)}+ n^{3-r} + n^{2+r/2})$. Note that by
assuming $r \ge \frac23$, we can get rid of the additive $n^{3-r}$
term. Hence, by solving the equation $1 - \lambda + \omega (1,
\lambda, 1) = 2 + \lambda/2$ implying $\lambda \ge \frac23$ by
$\omega(1, \lambda, 1) \ge 2$ and setting
sufficiently large $\beta$ in Lemma \ref{lem: grover} , we obtain our main result.

\begin{theorem}\label{theo: wit}
Let $\lambda$ be such that $\omega (1, \lambda, 1) = 1 + 1.5 \,
\lambda $. The maximum witnesses for all non-zero entries of the
Boolean product of two $n \times n$ Boolean matrices can be computed
almost certainly
by a quantum algorithm in $\tilde{O}(n^{2 + \lambda/2 })$ time.
\end{theorem}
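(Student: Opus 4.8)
The plan is to run Algorithm 4 with the strip width set to $\ell = n^{\lambda}$ and then read the running time off Lemma~\ref{lem: timealg4}. Before the timing I would first check correctness of Algorithm 4: for an entry with $C[i,j]=1$, an index $k$ in the $p$-th block of columns of $A$ (equivalently rows of $B$) is a witness precisely when it is a witness within $A_p,B_p$, and $C_p[i,j]=1$ iff such a witness exists, where $C_p=A_pB_p$. Hence the maximum witness of $C[i,j]$ over all of $[n]$ lies in the block of largest index $p$ with $C_p[i,j]=1$, and there it equals $\ell(p-1)$ plus the maximum witness of $C_p[i,j]$ — exactly what Step~4(b) returns via $MaxWit(A_p,B_p,i,j)$ and the offset $\ell(p-1)$. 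If no such $p$ exists then $C[i,j]=0$ and ``No'' is reported.

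For the running time I would substitute $r=\log_n\ell=\lambda$ into the bound of Lemma~\ref{lem: timealg4}, giving $\tilde{O}(n^{1-\lambda+\omega(1,\lambda,1)}+n^{3-\lambda}+n^{2+\lambda/2})$. By the defining equation $\omega(1,\lambda,1)=1+1.5\lambda$ the first exponent is $1-\lambda+(1+1.5\lambda)=2+\lambda/2$. Since $\omega(1,\lambda,1)\ge 2$ for every $\lambda$, the same equation forces $1+1.5\lambda\ge 2$, i.e. $\lambda\ge\tfrac23$, whence $3-\lambda\le 2+\lambda/2$ and the middle term is dominated by the last. So all three terms are $\tilde{O}(n^{2+\lambda/2})$. (I would also remark that $\ell=n^{\lambda}$ is the best choice: with $g(r)=1-r+\omega(1,r,1)$ and $h(r)=2+r/2$, a padding argument, or Fact~\ref{fact:1}, shows $r\mapsto\omega(1,r,1)-r$ is non-increasing so $g$ is decreasing while $h$ is increasing; hence $\max\{g(r),h(r)\}$ is minimized exactly at the crossing $g(r)=h(r)$, i.e. at $r=\lambda$.)

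For the success probability: Algorithm 4 makes at most $n^2$ calls to $MaxWit$. Fixing a constant $\beta\ge 3$, Lemma~\ref{lem: grover} makes each call correct with probability at least $1-n^{-\beta}$ in $\tilde{O}(\sqrt\ell)$ time, so a union bound leaves the entire set of maximum witnesses correct with probability at least $1-n^{2-\beta}$; the block products $C_p$ in Step~3 are computed deterministically by fast (rectangular) matrix multiplication. Thus the output is correct almost certainly, and choosing $\beta$ larger only changes constants, leaving the total time $\tilde{O}(n^{2+\lambda/2})$. Finally I would record that such a $\lambda\in(0,1)$ exists: $\lambda\mapsto\omega(1,\lambda,1)-1-1.5\lambda$ is continuous, equals $2-1-0=1>0$ at $\lambda=0$ and $\omega-2.5<0$ at $\lambda=1$ by Fact~\ref{fact: omega}, so it vanishes somewhere in between.

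I do not expect a genuine obstacle here — the theorem is essentially a corollary of Lemmas~\ref{lem: timealg4} and \ref{lem: grover}. The only point needing care is the term-by-term domination, in particular verifying that the defining equation really pushes $\lambda$ above $\tfrac23$ so that the $n^{3-\lambda}$ term disappears; everything else is bookkeeping of the Grover-search error probabilities.
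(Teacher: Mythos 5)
Your proposal is correct and follows essentially the same route as the paper: run Algorithm 4 with $\ell=n^{\lambda}$, substitute $r=\log_n\ell$ into Lemma~\ref{lem: timealg4}, balance the first and third terms via the defining equation $\omega(1,\lambda,1)=1+1.5\lambda$, and observe that $\omega(1,\lambda,1)\ge 2$ forces $\lambda\ge\tfrac23$ so the $n^{3-\lambda}$ term is absorbed, with a large enough $\beta$ in Lemma~\ref{lem: grover} giving the ``almost certainly'' guarantee. The extra details you supply (correctness of the block decomposition, existence and optimality of $\lambda$, the explicit union bound) are consistent with, and slightly more complete than, the paper's own argument.
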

\junk{
Coppersmith \cite{C97} and Huang and Pan \cite{HP} proved the
following fact,
 where the bounds on $\omega$ and $\alpha$
are updated by the more recent results from \cite{LG14} and \cite{LGU}, respectively.

\begin{fact}{\bf \cite{C97,HP}}
\label{fact:1}
Let $\omega = \omega (1,1,1) < 2.3728639$ and let $\alpha = sup \{0
\le r \le 1 : \omega(1,r,1) = 2 + o(1) \} > 0.31389.$
Then
$\omega(1,r,1) \le \beta(r)$, where $\beta (r) = 2 + o(1)$ for $r
\in [0, \alpha]$ and $\beta(r) = 2 + \frac{\omega - 2}{1 - \alpha}
(r - \alpha) + o(1)$ for $r \in [\alpha, 1]$.
\end{fact}}

Note that by Fact \ref{fact:1}, the solution $\lambda$ of the
equation $\omega (1, \lambda, 1) = 1 + 1.5 \, \lambda $ is satisfied
by 
$\lambda = \frac{1 - \alpha \, (\omega - 1)}{1.5 \, (1 - \alpha)
- (\omega - 2)} + o(1)$. 
Note also that $\lambda$ is increasing in $\omega$ 
and decreasing in $\alpha .$
Hence, the inequality $\lambda < 0.8671$ holds 
by Fact \ref{fact: omega} and Fact \ref{fact: ur}.
We obtain the
following concrete corollary.

\begin{corollary}\label{cor: wit}
The 
maximum witnesses for all non-zero entries of the Boolean
product of two $n\times n$ Boolean matrices can be computed
almost certainly
by a quantum algorithm in $\tilde{O}(n^{2.4335})$ time.
\end{corollary}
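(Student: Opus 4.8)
The plan is to obtain this corollary from Theorem~\ref{theo: wit} by a purely numerical estimate of the exponent $\lambda$, feeding in Facts~\ref{fact: omega},~\ref{fact: alpha} and~\ref{fact:1}. First I would locate $\lambda$: since $\omega(1,r,1)\ge 2$ for every $r$, the equation $\omega(1,\lambda,1)=1+1.5\,\lambda$ forces $\lambda\ge\tfrac23$, while comparing the two sides at $r=1$ (where $1+1.5\,r=2.5>\omega(1,1,1)$) shows $\lambda<1$; in particular $\lambda>\alpha$ by Fact~\ref{fact: alpha}, so $\lambda$ sits on the non-trivial affine piece of the envelope $\beta$ of Fact~\ref{fact:1}.

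Next I would convert the equation into an upper bound on $\lambda$. Applying Fact~\ref{fact:1} at $r=\lambda$ gives $1+1.5\,\lambda=\omega(1,\lambda,1)\le\beta(\lambda)=2+\frac{\omega-2}{1-\alpha}(\lambda-\alpha)$. The right-hand side is affine in $\lambda$ with slope $\frac{\omega-2}{1-\alpha}$, which is strictly below $1.5$ for the parameter ranges allowed by Facts~\ref{fact: omega} and~\ref{fact: alpha}, and at $\lambda=\alpha$ the left side $1+1.5\,\alpha$ is below the right side $2$; hence the two affine functions cross at a single point $\lambda_0>\alpha$, and the displayed inequality is equivalent to $\lambda\le\lambda_0$. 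Solving the corresponding equality for $\lambda_0$ produces the closed form $\lambda_0=\dfrac{1-\alpha(\omega-1)}{1.5(1-\alpha)-(\omega-2)}$ already recorded after Theorem~\ref{theo: wit}.

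Finally I would estimate $\lambda_0$ numerically. As numerator and denominator of the closed form are affine in $\omega$ and in $\alpha$, checking a couple of partial derivatives (with $\omega<2.4$ and $\alpha<1$) shows $\lambda_0$ is increasing in $\omega$ and decreasing in $\alpha$; substituting $\omega\le 2.3728639$ and $\alpha\ge 0.31389$ therefore over-estimates $\lambda_0$, and a direct evaluation bounds the result by $0.8671$. Hence $2+\lambda/2\le 2+\lambda_0/2<2.4335$, and substituting into the $\tilde{O}(n^{2+\lambda/2})$ bound of Theorem~\ref{theo: wit} gives the corollary.

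The one delicate point is the second paragraph: one must apply Fact~\ref{fact:1} in the correct direction, so that the \emph{true} solution $\lambda$ is shown to be \emph{at most} the explicit root $\lambda_0$ of the approximating equation rather than the reverse, and one must justify the monotonicity of $\lambda_0$ so that plugging in the extreme admissible values of $\omega$ and $\alpha$ is legitimate; the rest is routine arithmetic.
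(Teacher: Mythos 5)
Your argument is correct and takes essentially the same route as the paper: the paragraph preceding Corollary~\ref{cor: wit} derives exactly the closed form $\lambda_0=\frac{1-\alpha(\omega-1)}{1.5(1-\alpha)-(\omega-2)}+o(1)$ from Fact~\ref{fact:1}, asserts monotonicity in $\omega$ and $\alpha$, and substitutes the bounds of Facts~\ref{fact: omega} and~\ref{fact: alpha} to obtain $\lambda<0.8671$; you merely make explicit the direction-of-inequality argument ($\lambda\le\lambda_0$ because the affine bound has slope below $1.5$) and the monotonicity checks that the paper leaves implicit. The only blemish, inherited from the paper's own statement, is the final rounding: $2+0.8671/2=2.43355$, so a strictly safe exponent would be $2.4336$ (or the $2.434$ quoted in the abstract) rather than $2.4335$.
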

\junk{
By combining Theorem \ref{theo: max0} with Theorem \ref{theo: wit} and
Corollary \ref{cor: wit}, we obtain also an $O(n^{2.575})$-time
solution to the all-pairs LCA problem in dags.

\begin{theorem}\label{theo: lambda}
Let $\omega (1, \lambda, 1) = 1 + 2 \, \lambda$. The all-pairs LCA
problem for an arbitrary dag with $n$ vertices can be solved in
time $O(n^{2 + \lambda })$, which is bounded above by $O(n^{2.575})$.
\end{theorem}
}

\section{Applications of quantum algorithms for MW}

The problem of finding a \emph{lowest common ancestor} (LCA) in a
tree, or more generally, in a \emph{directed acyclic graph} (dag) is
a basic problem in algorithmic graph theory.
A LCA of vertices $u$ and
$v$ in a dag is an ancestor of both $u$ and $v$ that has no
descendant which is an ancestor of $u$ and $v$, see Fig. 3.
We consider the problem of preprocessing a dag such that
LCA queries can be answered quickly for any pair of vertices.
The all-pairs LCA problem is to compute LCA for all pairs
of vertices in the input dag.
\begin{figure}
\begin{center}
  \includegraphics[scale=0.4]{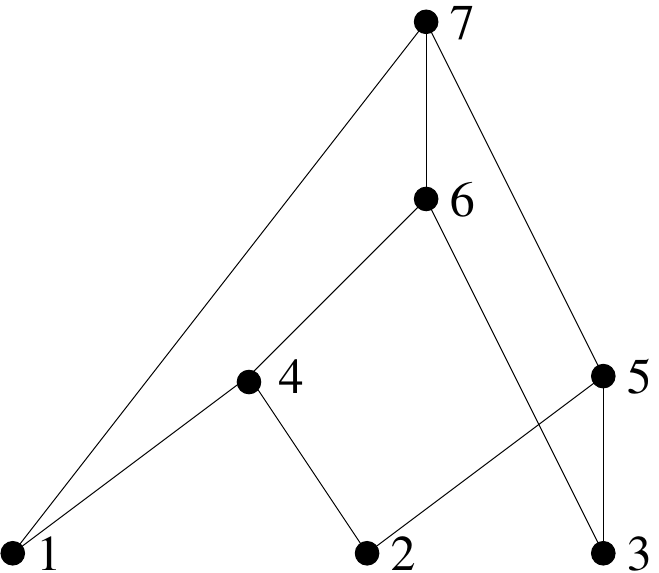}
\end{center}
\caption{An example of a dag. 
Note that the vertices 5 and  6 are LCA for the vertices  2,\ 3,
and the vertex 6 is also an  LCA for the vertices 1,\ 3
but it is not an LCA for the vertices  1,\ 2.}
\end{figure}
In the proof of Theorem 11 in \cite{CK07}, on the basis of an input
$n$-vertex dag, a Boolean $n\times n$ matrix $A$ is constructed in
$O(n^{\omega})$ time such that the maximum witness for $C[i,j]$, where
$C=A\times A^t$ yields an LCA for vertices $i,\ j$ in the dag.  Combining
this with Corollary \ref{cor: wit}, we obtain 
also the following theorem.
\junk{Theorem \ref{theo: tradeoffs}, we obtain the following
  trade-offs between preprocessing time and query time for LCA queries
  in the quantum computational model.

\begin{theorem}\label{theo: lcatradeoffs}
Let $D$ be a dag
on $n$ vertices and let
$\ell$ be a parameter in $[n]$.
After $O(n^{\omega})$-time preprocessing,
any LCA query in $D$
can be answered in $\tilde{O}(\sqrt n )$ time.
After an $O((n/\ell)n^{\omega(1,\log_n \ell,1)})$
time preprocessing,
any LCA query in $D$
can be answered in $\tilde{O}(n/\ell+ \sqrt \ell)$
time. After an $O((n/\ell)n^{\omega(1,\log_n \ell,1)}+n^{3}/\ell))$
time preprocessing,
any LCA query in  $D$
can be answered in $\tilde{O}(\sqrt \ell )$ time.
Finally, after preprocessing
 in time
$\tilde{O}((n/\ell)n^{\omega(1,\log_n \ell,1)}+n³/\ell + n^2\sqrt \ell)$,
any LCA query in $D$
can be answered in $O(1)$ time.
\end{theorem}

By Theorem 11 in \cite{CK07}
and Corollary \ref{cor: wit}, we obtain 
also the following theorem.}

\begin{theorem}
The all-pairs LCA problem can be solved
by a quantum algorithm in $O(n^{2.4335})$ time.
\end{theorem}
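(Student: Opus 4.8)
The plan is to reduce the all-pairs LCA problem to the MW problem exactly as in the proof of Theorem~11 of \cite{CK07}, and then invoke our quantum algorithm for MW from Corollary~\ref{cor: wit}. First I would take the input $n$-vertex dag and build, in $O(n^{\omega})$ time, the Boolean $n\times n$ matrix $A$ described there. This construction is an ordinary (classical) matrix computation, so it is performed without difficulty on a QRAM, which subsumes the classical RAM; after it is done $A$ is left sitting in the QRAM memory, which makes the oracles $O_A$ and $O_{A^{t}}$ available at unit cost (the transpose costs nothing beyond swapping the two index registers). By the cited construction, if $C=A\times A^{t}$ then, whenever $C[i,j]=1$, the maximum witness of $C[i,j]$ identifies an LCA of the vertices $i$ and $j$, while $C[i,j]=0$ records that $i$ and $j$ have no common ancestor at all.

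Next I would run the quantum algorithm of Theorem~\ref{theo: wit} on the pair $A,\ A^{t}$. By that theorem it reports, almost certainly, the maximum witness of every non-zero entry of $C=A\times A^{t}$ together with ``No'' for every zero entry, in $\tilde{O}(n^{2+\lambda/2})$ time, where $\omega(1,\lambda,1)=1+1.5\,\lambda$. Reading off, for each pair $i,j$, the LCA associated by the \cite{CK07} reduction with the reported maximum witness of $C[i,j]$ then solves the all-pairs LCA problem with the same success probability, so the ``almost certainly'' guarantee carries over.

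Finally I would combine the two phases, giving a total running time of $O(n^{\omega})+\tilde{O}(n^{2+\lambda/2})$. Since $\omega<2.3729<2+\lambda/2$ by Fact~\ref{fact: omega}, the preprocessing term is dominated by the second term; and since $\lambda<0.8671$ by Facts~\ref{fact: omega} and~\ref{fact: alpha}, the exponent $2+\lambda/2$ lies strictly below $2.4335$, leaving enough slack to absorb the polylogarithmic factors hidden in $\tilde{O}(\cdot)$, whence the bound $O(n^{2.4335})$. The only step that I expect to require a little care is precisely this last absorption --- verifying that the numerical gap in $\lambda<0.8671$ really does dominate the fixed-degree polylogarithmic overhead coming from the $O(\log n)$ repetitions of the D\"urr--H\o yer search inside $MaxWit$; everything else is a routine composition of the two cited results.
\qed
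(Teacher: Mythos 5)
Your proof is correct and follows exactly the route the paper takes: construct the matrix $A$ from the dag via the reduction in the proof of Theorem~11 of \cite{CK07} in $O(n^{\omega})$ time, then apply the quantum MW algorithm of Corollary~\ref{cor: wit} to $A$ and $A^{t}$, with the preprocessing term dominated since $\omega<2+\lambda/2$. The paper's own justification is a one-sentence combination of these same two ingredients, so your more detailed write-up (including the remark on absorbing polylogarithmic factors into the exponent slack) adds nothing different in substance.
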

Very recently, Grandoni et al. have presented an $\tilde{O}(n^{2.447})$-time algorithm for
the LCA problem in the standard computational model \cite{GIL}.

Shapiro et al. considered the following 
all-pairs bottleneck weight path problem
 in directed, vertex weighted graphs in \cite{SYZ11}.
Let $G=(V,E)$ be a directed, vertex-weighted
graph. The {\em bottleneck} weight of a directed path
in $G$ is the minimum weight of a vertex on the path. 
For two vertices $u,\ v$ of $G,$ the bottleneck
weight from $u$ to $v$ is the maximum bottleneck
weight of a directed path from $u$ to $v$ in $G.$
The all-pairs bottleneck paths problem (APBP) 
is to find bottleneck weights for all ordered
pairs of vertices in $G.$ The authors of \cite{SYZ11}
considered two variants of APBP,
an open variant where the weights of 
the start and end vertices are not counted
and a closed variant where 
the weights of the start
and end vertices are counted. In particular,
in Theorem 2 in \cite{SYZ11}, they show that
both variants of APBP,
MW, and the problem
of computing maximum weight of two-edge paths 
between all pairs of vertices in vertex weighted
graphs are computationally equivalent 
(up to constant factors).
Hence, by Corollary \ref{cor: wit}, we obtain the following theorem.

\begin{theorem}
The following problems admit
an $\tilde{O}(n^{2.434})$-time quantum
algorithm: Open APBP, Closed APBP,
the all-pairs maximum weight two-edge  paths
in vertex weighted graphs.
\end{theorem}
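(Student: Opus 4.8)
The plan is to reduce each of the three listed problems to the MW problem with only low-order computational overhead and then invoke Corollary~\ref{cor: wit}. Theorem~2 of \cite{SYZ11} asserts that Open APBP, Closed APBP, MW, and the all-pairs maximum weight two-edge paths problem are mutually reducible up to constant factors; more precisely, each reduction consists of a transformation of the instance that costs $\tilde{O}(n^2)$ time (or at worst one fast square or balanced-rectangular matrix multiplication, i.e. $O(n^{\omega})$ time) plus a single call to the target problem on an $n\times n$ instance. First I would spell out these reductions in the direction we need: given an instance of Open APBP (respectively, Closed APBP, or the two-edge path problem) on $n$ vertices, build in $\tilde{O}(n^2)$ (or $O(n^\omega)$) time the pair of Boolean matrices whose maximum-witness matrix encodes the desired answer, and conversely read off the answer from the maximum witnesses.

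Next I would check that these reductions carry over verbatim to the QRAM model. Every step is a deterministic classical computation that either scans the input in $\tilde{O}(n^2)$ time or performs one fast matrix multiplication, and $O(n^{\omega}) = O(n^{2.3729})$ is dominated by $\tilde{O}(n^{2.434})$. Because the reductions are deterministic, the only randomness in the composed algorithm comes from the quantum MW subroutine, so the success probability guaranteed by Corollary~\ref{cor: wit} (``almost certainly'') is inherited unchanged.

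Finally, I would plug in Corollary~\ref{cor: wit}: the MW instance produced by the reduction is solved almost certainly by a quantum algorithm in $\tilde{O}(n^{2.4335})$ time, and since $n^{2.4335} = O(n^{2.434})$, adding the $\tilde{O}(n^2 + n^{\omega})$ cost of the reduction still yields an $\tilde{O}(n^{2.434})$-time quantum algorithm for each of Open APBP, Closed APBP, and all-pairs maximum weight two-edge paths, which is the claim.

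The main obstacle is the second step: one must verify that none of the reductions of \cite{SYZ11} secretly costs more than $\tilde{O}(n^{2.434})$, in particular that any matrix product they invoke is a square product (cost $O(n^\omega)$) or a rectangular product $n\times n^{\mu}$ by $n^{\mu}\times n$ with $\mu$ small enough that $n^{\omega(1,\mu,1)}$ stays below $n^{2.434}$. Once that is confirmed, the rest is routine bookkeeping of the additive $\tilde{O}(n^2 + n^\omega)$ terms against the dominant $\tilde{O}(n^{2.4335})$ bound.
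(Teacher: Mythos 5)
Your proposal matches the paper's own argument: the paper likewise cites Theorem~2 of \cite{SYZ11} for the computational equivalence (up to constant factors) of Open APBP, Closed APBP, MW, and the all-pairs maximum weight two-edge paths problem, and then invokes Corollary~\ref{cor: wit} to conclude. Your additional bookkeeping about the $\tilde{O}(n^2+n^{\omega})$ reduction overhead being dominated by $\tilde{O}(n^{2.4335})$ is a reasonable elaboration of what the paper leaves implicit.
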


As a corollary, we obtain a faster quantum
algorithm for the problem considered in
\cite{VWY10}.

\begin{corollary} 
Let $G$ be an undirected
vertex-weighted graph on $n$
vertices.
The problem of finding  for each  edge $\{u,v\}$ of $G,$ a
heaviest  (or, lightest) triangle $\{u,v,w\}$ in $G$
admits a quantum algorithm running 
 in $O(n^{2.434})$ time. 
\end{corollary}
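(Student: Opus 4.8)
The plan is to invoke the reduction from \cite{VWY10} (see also the equivalences in \cite{SYZ11}) together with the preceding theorem, which already supplies an $\tilde{O}(n^{2.434})$-time quantum algorithm for the all-pairs maximum weight two-edge paths problem in vertex-weighted graphs.

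First I would recall the reduction. Let $G$ be the input undirected vertex-weighted graph with weight function $w$. For a fixed edge $\{u,v\}$, a heaviest triangle containing it is a set $\{u,v,x\}$ maximizing $w(u)+w(v)+w(x)$ over all vertices $x$ that are common neighbors of both $u$ and $v$. Since $w(u)+w(v)$ is constant for the edge $\{u,v\}$, this is exactly the task of finding a maximum-weight common neighbor of $u$ and $v$, i.e. the maximum weight of a two-edge path from $u$ to $v$ where the weight of such a path is the weight of its middle vertex. Hence running the algorithm from the preceding theorem on $G$ yields, for every pair $(u,v)$, both the value of the maximum such two-edge path and a vertex $x$ realizing it, in $\tilde{O}(n^{2.434})$ time.

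Then, in an $O(n^2)$-time post-processing pass, I would iterate over the edges $\{u,v\}$ of $G$, and for each edge look up the corresponding maximum-weight two-edge path value together with the witness vertex $x$ realizing it; if such a common neighbor exists and $x\notin\{u,v\}$, output the triangle $\{u,v,x\}$ of weight $w(u)+w(v)+w(x)$, and otherwise report that no triangle contains $\{u,v\}$. For the ``lightest'' variant one negates the vertex weights (equivalently, uses the minimum-witness version of the machinery), which is handled symmetrically. As the post-processing is dominated by the $\tilde{O}(n^{2.434})$ term, the claimed bound follows; by Corollary \ref{cor: wit} this is $O(n^{2.434})$.

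The only point requiring care — not real difficulty — is ensuring that the witness returned for the two-edge path between $u$ and $v$ is a genuine third distinct vertex adjacent to both endpoints, i.e. that adjacency is encoded into the matrices so that non-neighbors are excluded and $u,v$ themselves are never treated as middle vertices. This is precisely what the construction of \cite{VWY10} arranges, so I would cite it rather than reprove it.
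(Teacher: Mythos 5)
Your proof is correct and follows essentially the same route as the paper, which derives the corollary directly from the preceding theorem on all-pairs maximum weight two-edge paths (itself obtained via the equivalences of \cite{SYZ11}) together with the reduction of heaviest-triangle-through-an-edge to maximum-weight common neighbors from \cite{VWY10}. The paper leaves the $O(n^2)$ post-processing and the handling of the lightest variant implicit, but your elaboration of these details is accurate and adds nothing that conflicts with the intended argument.
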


\section{Approximation algorithms}
In this section, we present two approximation approaches to
MW in a standard computational model.
The first approach follows the idea of the fastest known algorithm
for MW \cite{CK07} but instead of searching the final index intervals
where the respective maximum witnesses are localized 
some witnesses from the intervals are reported. The second approach relies
on the repetitive applying the deterministic algorithm
for multiple witnesses from \cite{GKL08} and  the goodness
of its approximation for a matrix product entry depends on the number of witnesses
for the entry.

\subsection{The method based on rectangular matrix multiplication}

By slightly modifying the algorithm for MW \cite{CK07}
(or, the quantum Algorithm 4)
based on fast rectangular multiplication, we can obtain a faster
approximation algorithm. For a given $\ell $,
it reports for each non-zero entry of the Boolean
matrix product a witness of rank not exceeding $\ell$
instead of the maximum witness. In the time analysis of the approximation
algorithm, we rely on
the fact that witnesses for non-zero
entries of the Boolean product of two Boolean matrices can be reported
in time proportional to the time taken by fast Boolean matrix multiplication
up to polylogarithmic factors (see Fact \ref{fact: wit}).
\par
\vskip 4pt
\noindent
{\bf Algorithm 5}
\par
\noindent
{\em Input:}
Boolean $n\times n$ matrices $A,\ B,$ and
a parameter $\ell \in [n].$
\par
\noindent
{\em Output:} witnesses for all non-zero
entries of the Boolean product of $A$ and $B$ having rank not exceeding $\ell$
and ``No''  for all zero entries of the product.
\noindent
\begin{enumerate}
\item Divide $A$ into $\lceil n/\ell \rceil$ vertical strip
submatrices $A_1,...,A_{\lceil n/\ell \rceil}$ of width $\ell$
with the exception of the last one that can have width $\le \ell .$
\item Divide $B$ into $\lceil n/\ell\rceil$ horizontal strip
submatrices $B_1,...,B_{\lceil n/\ell \rceil}$ of width $\ell$
with the exception of the last one that can have width $\le \ell.$
\item {\bf for} $p\in [\lceil n/\ell \rceil ]$ {\bf do}
 \newline
 Compute the Boolean product $C_p$
 of $A_p$ and $B_p$ along with single witnesses for all positive entries of the product
\item {\bf for} all $i,\ j\in [n]$ {\bf do}
\begin{enumerate}
\item Find the largest $p$ such that $C_p[i,j]=1$ or set $p=0$ if it does not exist.
\item {\bf if} $p>0$ {\bf then} return the found witness of  $C_p[i,j]$ {\bf else} return ``No''
\end{enumerate}
\end{enumerate}

\begin{lemma} \label{lem: app1}
Algorithm 5 runs in time
$\tilde{O}((n/\ell)n^{\omega(1,\log_n\ell,1)})$.
\end{lemma}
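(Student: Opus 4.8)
The plan is to mirror the time analysis of Lemma \ref{lem: timealg4} for Algorithm 4, but replace the quantum-search step (Step 4(b) there, costing $\tilde O(n^2\sqrt\ell)$ total) by the classical witness-reporting subroutine of Fact \ref{fact: wit}, which is charged already in the matrix-product step. First I would account for Steps 1 and 2: splitting $A$ and $B$ into $\lceil n/\ell\rceil$ strips of width at most $\ell$ is a straightforward bookkeeping operation on the matrix entries and costs $O(n^2)$ time, which is dominated by the stated bound since $\omega(1,\log_n\ell,1)\ge 2$ and $n/\ell\ge 1$.

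Next I would analyze Steps 3--4, the loop over $p\in[\lceil n/\ell\rceil]$ that computes each Boolean product $C_p=A_p\times B_p$ together with a single witness for every positive entry of $C_p$. Here $A_p$ is $n\times\ell$ and $B_p$ is $\ell\times n$, so by Fact \ref{fact: wit} (with $q=\log_n\ell$, noting $\ell=n^{\log_n\ell}$) each such product-with-witnesses is computed in $\tilde O(n^{\omega(1,\log_n\ell,1)})$ time. Summing over the $\lceil n/\ell\rceil$ values of $p$ gives $\tilde O((n/\ell)\,n^{\omega(1,\log_n\ell,1)})$ total, which is exactly the claimed bound.

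Finally I would handle Step 5, the double loop over all $i,j\in[n]$: for each pair we scan $p=1,\dots,\lceil n/\ell\rceil$ to find the largest $p$ with $C_p[i,j]=1$ and then output the previously stored witness (shifted by the offset of strip $p$, as in Algorithm 4), all in $O(n/\ell)$ time per pair. This contributes $O(n^2\cdot n/\ell)=O(n^3/\ell)$ in total. It remains to check that $n^3/\ell$ is absorbed into $(n/\ell)\,n^{\omega(1,\log_n\ell,1)}$; since $\omega(1,\log_n\ell,1)\ge 2$ we have $(n/\ell)\,n^{\omega(1,\log_n\ell,1)}\ge (n/\ell)\,n^2=n^3/\ell$, so the bound of the lemma follows. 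Unlike the quantum Algorithm 4, there is no $n^2\sqrt\ell$ term here, precisely because the maximum-witness search inside each strip has been replaced by an arbitrary witness that is produced for free by the Alon--Naor-type routine during the matrix multiplication. I do not expect a genuine obstacle; the only point that needs a word of care is the identity $\ell=n^{\log_n\ell}$ so that Fact \ref{fact: wit} applies to strips of width exactly $\ell$ (the last strip, being narrower, is only cheaper), and the simple exponent comparison $\omega(1,\log_n\ell,1)\ge 2$ used to collapse the three terms into one.
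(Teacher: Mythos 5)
Your proposal is correct and follows essentially the same route as the paper's own proof: charge $O(n^2)$ for the strip decomposition, $\tilde{O}((n/\ell)n^{\omega(1,\log_n\ell,1)})$ for the $\lceil n/\ell\rceil$ rectangular products with witnesses via the generalization of the Alon--Naor witness routine (Fact~\ref{fact: wit}), $O(n^3/\ell)$ for the per-entry scan over $p$, and absorb everything using $\omega(1,\log_n\ell,1)\ge 2$. The supporting remarks (the identity $\ell=n^{\log_n\ell}$, the narrower last strip, and the contrast with the $n^2\sqrt{\ell}$ term of the quantum Algorithm~4) are accurate and do not change the argument.
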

\begin{proof}
Steps 1, 2, take $O(n^2)$ time.
Step 3 requires $\tilde{O}((n/\ell)n^{\omega(1,\log_n \ell,1)})$ time
by a straightforward generalization of the $\tilde{O}(n^{\omega})$-time
algorithmic solution to the witness problem
for square Boolean matrices given in Fact \ref{fact: wit} to include
rectangular Boolean matrices.
Step 4(a) takes $O(n^2\times n/\ell)$ time totally.
Finally, Step 4(b) requires $O(n^2)$ time
totally. It remains to observe that the term
$\tilde{O}((n/\ell)n^{\omega(1,\log_n \ell,1)})$ dominates
the asymptotic time complexity of the algorithm
by $\omega(1,\log_n\ell,1)\ge 2.$
\qed
\end{proof}

\begin{theorem}
 For all non-zero entries 
of the Boolean matrix product of
two Boolean  $n\times n$ matrices,
witnesses of rank not exceeding $\ell$
can be reported
 in time $\tilde{O}((n/\ell)n^{\omega(1,\log_n\ell,1)})$.
\end{theorem}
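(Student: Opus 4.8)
The plan is to read the theorem off from Algorithm 5 together with Lemma~\ref{lem: app1}: the running time is exactly the bound proved there, so the only thing left to establish is correctness, namely that for each non-zero entry the index returned in Step~4(b) is a witness whose rank is at most $\ell$, and that ``No'' is returned precisely for the zero entries.

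First I would set up notation for the strip decomposition of Steps~1--2. Indexing the columns of $A$ (equivalently the rows of $B$) so that the $p$-th block is $\{\ell(p-1)+1,\dots,\min\{\ell p,n\}\}$, we have, over the Boolean semiring, $C[i,j]=\bigvee_{p=1}^{\lceil n/\ell\rceil} C_p[i,j]$ with $C_p=A_p\times B_p$; every witness $k$ of $C[i,j]$ lies in exactly one block, the one of index $\lceil k/\ell\rceil$, and $C_p[i,j]=1$ iff $C[i,j]$ has a witness in block $p$. The witness produced for $C_p[i,j]$ in Step~3 (via the rectangular-matrix version of the witness algorithm, Fact~\ref{fact: wit}) is exactly such a block-$p$ witness of $C[i,j]$.

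Next I would prove the rank bound. Fix a non-zero entry $C[i,j]$ and let $p^{*}$ be the largest index with $C_{p^{*}}[i,j]=1$ chosen in Step~4(a); it exists because $C[i,j]=1$. If $C[i,j]$ had a witness $k'>\ell p^{*}$, then $k'$ would lie in a block of index $>p^{*}$, forcing that block's entry to be $1$ and contradicting maximality of $p^{*}$; hence every witness of $C[i,j]$ is $\le\ell p^{*}$. The index $k$ returned in Step~4(b) is a witness of $C[i,j]$ in block $p^{*}$, so $k>\ell(p^{*}-1)$, and therefore every witness of $C[i,j]$ that exceeds $k$ lies in $\{k+1,\dots,\ell p^{*}\}$, a set of at most $\ell-1$ integers. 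By the definition of rank, $k$ has rank at most $\ell$. For a zero entry every $C_p[i,j]$ is $0$, so Step~4(a) sets $p=0$ and Step~4(b) returns ``No'', which is correct.

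Combining this correctness argument with Lemma~\ref{lem: app1} yields the theorem. I do not expect a genuine obstacle: the one place that needs care is the bookkeeping between a block-local column index and its global value, together with the observation that the ``largest $p$'' rule is precisely what confines all witnesses larger than the reported one to a single block of width at most $\ell$ --- which is exactly what converts the strip width into the claimed rank bound.
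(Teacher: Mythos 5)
Your proposal is correct and follows exactly the route the paper intends: the paper states the theorem as an immediate consequence of Algorithm~5 and Lemma~\ref{lem: app1} (which supplies the running time), leaving the correctness argument implicit, and your rank analysis --- all witnesses exceeding the reported one are confined to the chosen strip of width $\ell$, so the returned witness has rank at most $\ell$ --- is precisely the missing bookkeeping. Nothing to add.
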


\subsection{The method based on multi-witnesses}

A straightforward method to obtain single witnesses 
of rank $O(\lceil W_C(i,j)/k \rceil )$
for the nonzero entries $C[i,j]$ of the Boolean product 
$C$ of two
Boolean $n\times n$ matrices is to iterate
a randomized algorithm for single witnesses for
the entries of $C$ \cite{AN96}. After $O(k \log n)$
iterations such witnesses 
can be reported with high probability.
This straightforward method takes $\tilde{O}(n^{\omega}k)$
time. We provide a more efficient algorithm
for this problem based on the  algorithm 
for the so called $k$-witness problem from
\cite{GKL08}.

The {\em $k$-witness problem} for 
the Boolean matrix product of two $n\times
n$ Boolean matrices is to produce a list of 
$r$ witnesses for each positive entry
of the product, 
where $r$ is the minimum of $k$ and
the total number of witnesses for this entry.

In the following fact from \cite{GKL08}, the upper bounds have been updated
by incorporating the more recent results
on the parameters $\omega$ (Fact \ref{fact: omega}) and $\alpha$ \cite{LGU}. 

\begin{fact}\label{fact: kwit}\cite{GKL08}
There is a randomized algorithm solving
the $k$-witness problem almost certainly in time
$\tilde{O}(n^{2+o(1)}k + n^{\omega}k^{(3-\omega - \alpha)/(1-\alpha)})$,
where $\alpha \approx 0.31389$
(see Fact \ref{fact: ur}). 
One can rewrite
the upper time bound
as $\tilde{O}(n^{\omega} k^{\mu} + n^{2+o(1)} k)$, where $\mu\approx 0.46530$.
\end{fact}
\vskip 4pt
\noindent
{\bf Algorithm 6}
\par
\noindent
{\em Input:}  Boolean $n\times n$ matrices $A,\ B,$ and
a parameter $k \in [n]$ not less than $4.$
\par
\noindent
{\em Output:} single witnesses $Wit[i,j]$ for all non-zero
entries $C[i,j]$ of the Boolean product $C$ of $A$ and $B$
such that $rank(Wit[i,j])\le 4\lceil W_C(i,j)/k \rceil$ with probability
at least $\frac 12 -e^{-1}$.
\begin{enumerate}
\item $D\leftarrow B$
\item Initialize $n\times n$ integer matrix $Wit$ 
by setting all its entries to $0.$
\item {\bf for} $q=1,..., O(\log n)$  {\bf do}
\begin{enumerate}
\item Run an algorithm for the $k$-witness problem
for the product $F$ of the matrices $A$ and $D$.
\item For all $1 \le i,\ j\le n,$ set 
$Wit[i,j]$ to the maximum of $Wit[i,j]$ and
the maximum among the reported witnesses for $F[i,j]$.
\item Uniformly at random set each $1$ entry
of $D$ to zero with probability $\frac 12$.
\end{enumerate}
\end{enumerate}

$TW(n,k)$ will stand for the running
time of the k-witness 
algorithm 
for the Boolean product
of the two input Boolean matrices of size $n\times n$
used in Algorithm 6.

\begin{lemma}\label{lem: timerank}
Algorithm 6 runs in $\tilde{O}(TW(n,k)+n^2k)$ time.
\end{lemma}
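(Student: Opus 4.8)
The plan is to account separately for the cost of the three steps of Algorithm 6 and observe that the loop of Step 3 contributes only a polylogarithmic multiplicative overhead. Steps 1 and 2 — copying $B$ into $D$ and initializing the integer matrix $Wit$ — are trivially done in $O(n^2)$ time, which is dominated by the other terms. The loop in Step 3 runs $O(\log n)$ times, so it suffices to bound the cost of a single iteration by $\tilde{O}(TW(n,k)+n^2k)$ and absorb the $O(\log n)$ factor into the $\tilde{O}(\cdot)$ notation. Within one iteration, Step 3(a) invokes the $k$-witness algorithm on an $n\times n$ Boolean product, costing $TW(n,k)$ by definition of that notation; Step 3(b) scans all $n^2$ entries of $F$ and, for each, takes the maximum over the at most $k$ reported witnesses, for a total of $O(n^2k)$ time; and Step 3(c) flips each of the at most $n^2$ one-entries of $D$ independently with probability $\frac12$, which is $O(n^2)$ time (assuming unit-cost sampling of a fair coin, as is standard).

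Summing, one iteration costs $O(n^2) + TW(n,k) + O(n^2k) + O(n^2) = \tilde{O}(TW(n,k)+n^2k)$, and multiplying by the $O(\log n)$ iterations and adding the $O(n^2)$ from Steps 1–2 leaves the bound unchanged up to the suppressed polylogarithmic factors. Hence Algorithm 6 runs in $\tilde{O}(TW(n,k)+n^2k)$ time, as claimed.

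There is essentially no hard part here: the only point requiring a moment's care is Step 3(b), where one must notice that although each call reports up to $k$ witnesses per entry, reading them off and taking a running maximum is linear in their number, giving the $n^2k$ term rather than something larger; and one should note that this $n^2k$ term is in fact already subsumed by the $n^2k$ (indeed $n^{2+o(1)}k$) term inside $TW(n,k)$ for the concrete algorithm of Fact \ref{fact: wit}, so the statement is stated in its most general, algorithm-agnostic form. I would also remark in passing that the correctness of the rank guarantee — that $\mathrm{rank}(Wit[i,j])\le 4\lceil W_C(i,j)/k\rceil$ with the stated probability — is a separate matter handled in the subsequent analysis and is not needed for the timing bound.
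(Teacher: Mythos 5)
Your proposal is correct and follows essentially the same route as the paper's (much terser) proof: bound a single iteration of the loop by $O(TW(n,k)+n^2k)$ and multiply by the $O(\log n)$ iterations, absorbing that factor into the $\tilde{O}(\cdot)$ notation. Your additional bookkeeping for Steps 1--2 and the explicit justification of the $n^2k$ term in Step 3(b) are fine elaborations of the same argument.
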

\begin{proof}
The block of the while loop can be implemented
in $O(TW(n,k)+n^2k)$ time. It is sufficient to
observe that the block is iterated
$O(\log n)$ times.
\qed
\end{proof}

\begin{lemma}\label{lem: rank}
For $1\le i,\ j\le n$ 
and $k\ge 4,$ the final
value of $Wit[i,j]$ in Algorithm 5 is
a witness of $C[i.j]$ with rank
at most $4\lceil W_C(i,j)/k \rceil$ with probability
not less than $\frac 12 -e^{-1}$.
\end{lemma}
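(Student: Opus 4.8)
The plan is to analyze, for a fixed pair $(i,j)$ with $C[i,j]=1$, the behavior of the maximum reported witness across the $O(\log n)$ rounds of the while loop. Let $w_1 > w_2 > \cdots > w_W$ be the witnesses for $C[i,j]$ in $A\times B$, where $W = W_C(i,j)$, so $w_h$ is the witness of rank $h$. Set $t = \lceil W/k \rceil$; we want to show that with probability at least $\tfrac12 - e^{-1}$ some round reports one of $w_1,\dots,w_{4t}$, equivalently that the $k$-witness algorithm in some round, when run on $A$ and the current $D$, sees at least one of these top $4t$ witnesses surviving in $D$ and hence outputs it (recall that the $k$-witness algorithm reports $\min\{k,\#\text{surviving witnesses}\}$ of them, and our code keeps the maximum over all reported witnesses and all rounds).

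First I would isolate a single round. In a given round each surviving witness of $C[i,j]$ (i.e.\ each column index that is still a $1$ in the relevant sense after the independent coin flips on $D$) is kept independently with probability $\tfrac12$ relative to the previous round; but it is cleanest to argue directly about round $q$ where $D$ has had each original $1$-entry zeroed independently with probability $1-2^{-q}$, so each of the $W$ witnesses survives into round $q$ independently with probability $2^{-q}$. Fix the round $q$ with $2^{-q}$ closest to $k/W$ from below, i.e.\ roughly $q = \lceil \log_2(W/k)\rceil$; then the expected number of surviving witnesses is $W 2^{-q} \in [k/2, k]$ (using $k\ge 4$ to make the rounding harmless, and using that $O(\log n)$ rounds is enough to include this $q$ since $W\le n$). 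The key observation: if in this round the number of surviving witnesses is between $1$ and $k$, the $k$-witness algorithm reports \emph{all} of them, in particular the largest surviving one; and if at least one witness among the top $4t$ survives, we are done regardless. So it suffices to lower-bound the probability that at least one of $w_1,\dots,w_{4t}$ survives in round $q$.

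The main estimate is then a two-sided tail bound. Let $N$ be the number of survivors among all $W$ witnesses in round $q$; $N$ is a sum of independent Bernoulli$(2^{-q})$ variables with mean $\mu \in [k/2,k]$. The bad event is that the largest $4t$ witnesses all fail to survive \emph{and} yet $N\ge 1$ — but actually the clean route is: the event ``none of the top $4t$ survive'' has probability $(1-2^{-q})^{4t}$, and since $2^{-q}\ge k/(2W)$ and $4t = 4\lceil W/k\rceil \ge 4W/k$, this is at most $(1 - k/(2W))^{4W/k} \le e^{-2}$. Hence with probability at least $1-e^{-2}$ at least one of the top $4t$ witnesses survives in round $q$; in that case the $k$-witness run of round $q$ reports it unless it reports $k$ witnesses all larger than it, which is impossible since it is among the top $4t \le 4W/k + 4 \le$ (comfortably) fewer than $k$ of them only when $t$ is small — here I need the second half of the argument, handling the case $W > k$: when the largest $4t$ witnesses contain more than $k$ survivors the algorithm might report only the top $k$ of the survivors, but those are all $\ge w_{4t}$, so they still have rank $\le 4t$, so the maximum reported witness again has rank $\le 4t$. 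Either way the reported maximum in round $q$ has rank at most $4t$, and $Wit[i,j]$ only increases over rounds, giving the claimed rank bound with probability at least $1 - e^{-2} \ge \tfrac12 - e^{-1}$.

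The step I expect to be the main obstacle is the bookkeeping that reconciles ``reports $\min\{k,\#\text{survivors}\}$ witnesses'' with ``the reported maximum has rank $\le 4t$'': one must check that even when more than $k$ of the $C[i,j]$-witnesses survive in the chosen round, the $k$ that get reported include at least one of the true top $4t$ — this uses that the $k$-witness algorithm's guarantee is on the \emph{count} of returned witnesses, not on which ones, so I must instead argue via the complementary event that the number of survivors is itself $\le k$ with decent probability (a lower-tail Chernoff bound on $N$ with mean $\le k$), intersecting it with the event that at least one top-$4t$ witness survives. Balancing these two bounds so that the intersection probability stays at least $\tfrac12 - e^{-1}$, and checking the $k\ge 4$ and $W\le n$ edge cases make the rounding $\lceil\cdot\rceil$ and the ``$O(\log n)$ rounds suffice'' claims rigorous, is where the real care goes; the rest is routine.
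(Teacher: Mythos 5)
Your proposal follows essentially the same route as the paper's proof: single out the round in which each witness of $C[i,j]$ survives in $D$ with probability roughly $k/W_C(i,j)$, bound the probability that none of the top $4\lceil W_C(i,j)/k\rceil$ witnesses survives by $(1-p)^{4W_C(i,j)/k}\le e^{-\Theta(1)}$, separately bound the probability that more than $k$ witnesses survive (so that the $k$-witness call reports \emph{all} survivors, in particular a surviving top-ranked one, and the running maximum $Wit[i,j]$ picks it up), and combine the two events via $\Pr(A\cap B)\ge 1-\Pr(\bar A)-\Pr(\bar B)$. Two points in your write-up need repair. First, the intermediate claim that when more than $k$ witnesses survive ``the algorithm might report only the top $k$ of the survivors, but those are all $\ge w_{4t}$'' is unjustified: the $k$-witness algorithm guarantees only \emph{how many} witnesses it returns, not which ones. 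You correctly retract this in your closing paragraph, and the retracted version --- intersect with the event that the survivor count is at most $k$, so that all survivors are reported --- is exactly the paper's argument. Second, with your choice $q=\lceil\log_2(W_C(i,j)/k)\rceil$ the expected number of survivors lies in $(k/2,k]$, so Markov's inequality gives only the trivial bound $\Pr(N>k)\le 1$, and you would indeed need the Chernoff/binomial-median argument you allude to. The paper sidesteps this by using one additional halving round: the mean drops to at most $k/2$, Markov immediately gives $\Pr(N>k)\le\frac12$, and the no-survivor bound degrades only from $e^{-2}$ to $e^{-1}$, which is precisely how the stated constant $\frac12-e^{-1}$ arises.
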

\begin{proof}
We may assume without loss of generality that
$W_C(i,j)/k> 1$ since otherwise the maximum
witness for $C[i,j]$ is found already
in the first iteration of the block of the while loop.
Let $\ell = \lceil \log_2 {W_C(i,j)/k }\rceil$.
A witness of the entry $C[i,j]$ survives
$\ell +1$ iterations of the block of the while
loop with probability $2^{-\ell -1}.$ Hence, 
after $\ell +1$ iterations of the block of the while
loop the expected number of witnesses of
the entry $C[i,j]$ that survive does not exceed $k/2$. 
Consequently, the number of witnesses of $C[i,j]$ that survive
does not exceed $k$ with probability at least $\frac 12.$
They are reported as witnesses of $F[i,j]$ in the $\ell +2$ iteration.
On the other hand, 
the probability 
that none of witnesses not greater
than $4W_C(i,j)/k $ survives the 
$\ell +1$ iterations is 
at most $(1-\frac 1{2^{\ell +1}})^{4W_C(i,j)/k}\le e^{-1}$
by $k\ge 4.$
Observe that for events $A$ and $B,$
$Prob(A \cap B)\ge 1-Prob(\bar{A} \cup \bar{B})\ge 1-Prob(\bar{A})
-Prob(\bar{B}).$
Hence, 
at least one witness of rank at most
$4W_C(i,j)/k $ survives $\ell +1$ iterations 
and it is reported in the $\ell +2$ iteration
with probability at least $1- \frac 12 -e^{-1}\ge \frac 12 -e^{-1}.$
\qed
\end{proof}

\begin{theorem}\label{theo: main2}
Let $C$ be the Boolean product
of two Boolean $n\times n$ matrices  
and let $k$ be an integer not less
than $4.$
One can compute for all non-zero
entries $C[i,j]$ single witnesses
of rank $O(\lceil W_C(i,j)/k\rceil )$
in $\tilde{O}(n^{\omega}k^{0.4653}+n^{2+o(1)}k)$
time almost certainly.
\end{theorem}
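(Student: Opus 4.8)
The plan is to combine the running-time bound from Lemma~\ref{lem: timerank} with the correctness guarantee from Lemma~\ref{lem: rank}, after first boosting the success probability of Algorithm~6 to ``almost certainly'' by independent repetition, and then plugging in the current best $k$-witness algorithm from Fact~\ref{fact: wit} for the subroutine whose running time is denoted $TW(n,k)$.

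First I would observe that Algorithm~6, as stated, only guarantees for a \emph{fixed} pair $(i,j)$ that the final $Wit[i,j]$ has rank at most $4\lceil W_C(i,j)/k\rceil$ with probability at least $\tfrac12 - e^{-1}$, a positive constant $p_0$. To upgrade this to a bound that holds simultaneously for all $n^2$ entries with probability $1 - n^{-\gamma}$, I would run $\Theta(\log n)$ independent copies of Algorithm~6 and, for each entry $C[i,j]$, take the maximum over all copies of the reported value $Wit[i,j]$ (equivalently, the minimum reported rank). Since the copies are independent and each succeeds for a given entry with probability $\ge p_0$, the probability that \emph{every} copy fails for that entry is at most $(1-p_0)^{\Theta(\log n)} \le n^{-(\gamma+2)}$ for a suitable constant in the $\Theta$; a union bound over the $n^2$ entries then gives overall success probability $\ge 1 - n^{-\gamma}$. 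One must be slightly careful that taking the maximum over copies can only improve (increase) a valid witness's rank bound — actually it can only help, since the maximum over correct copies is still a witness and its rank is no larger than that of any individual correct copy; incorrect copies can only push the value up toward the true maximum witness, which still satisfies the rank bound. This repetition multiplies the running time by a $\tilde{O}(1)$ factor, which is absorbed into the $\tilde{O}$ notation.

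Next I would instantiate the time bound. By Lemma~\ref{lem: timerank}, Algorithm~6 runs in $\tilde{O}(TW(n,k) + n^2 k)$ time, where $TW(n,k)$ is the running time of the $k$-witness subroutine in Step~3(a). Plugging in the randomized algorithm of Fact~\ref{fact: wit}, which solves the $k$-witness problem almost certainly in time $\tilde{O}(n^{\omega} k^{\mu} + n^{2+o(1)} k)$ with $\mu \approx 0.46530$, we get $TW(n,k) = \tilde{O}(n^{\omega} k^{0.4653} + n^{2+o(1)} k)$. Hence the total running time of the repeated algorithm is $\tilde{O}(n^{\omega} k^{0.4653} + n^{2+o(1)} k + n^2 k) = \tilde{O}(n^{\omega} k^{0.4653} + n^2 k)$, where the $n^{2+o(1)}k$ term is absorbed into $\tilde{O}(n^2 k)$ (the $o(1)$ in the exponent is subsumed by polylogarithmic slack, or one simply writes $n^{2+o(1)}k$; either way it matches the theorem's stated bound up to the conventions of $\tilde{O}$). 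Combining this with the boosted correctness statement yields exactly the claim of Theorem~\ref{theo: main2}.

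The main obstacle — really the only non-routine point — is the probability bookkeeping in the boosting step: one must verify that the ``take the maximum over independent runs'' operation genuinely preserves the per-entry rank guarantee (a correct run contributes a witness of rank $\le 4\lceil W_C(i,j)/k\rceil$, and replacing it by something at least as large keeps it a witness and can only move it closer to the maximum witness, hence the rank can only decrease), so that a single successful run among the $\Theta(\log n)$ suffices. Everything else is a direct substitution of Fact~\ref{fact: wit} into Lemma~\ref{lem: timerank} together with Lemma~\ref{lem: rank}. I would also note in passing that $\omega(1,\log_n k,1) \le \omega$ together with $k \le n$ keeps all exponents in the stated range, and that the hypothesis $k \ge 4$ is exactly what Lemma~\ref{lem: rank} requires.
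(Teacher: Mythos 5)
Your proposal is correct and follows essentially the same route as the paper: boost the constant per-entry success probability of Algorithm~6 to ``almost certainly'' by $O(\log n)$ independent repetitions (invoking Lemma~\ref{lem: rank}), then substitute the bound on $TW(n,k)$ from Fact~\ref{fact: wit} into Lemma~\ref{lem: timerank}. Your explicit check that taking the maximum over runs preserves the rank guarantee (a larger witness value can only have smaller rank) is a detail the paper leaves implicit, but it is the same argument.
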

\begin{proof}
By Lemma \ref{lem: rank}, it is sufficient to
iterate Algorithm 5 $O(\log n)$ times
to achieve the probability of at
least $1-n^{-\beta}$, $\beta \ge 1.$
The time complexity bound follows
from Lemma \ref{lem: timerank}
by the upper bound on
$TW(n,k)$ from Fact \ref{fact: kwit}.
\qed
\end{proof}

By plugging the randomized upper bound on $CW(n,k)$ from Fact
\ref{fact: wit} into Theorem \ref{theo: main2} and assuming the
notation from the theorem, we obtain the following corollary.

\begin{corollary} \label{cor: approx}
There is a randomized
algorithm that for $4\le k\le n^{0.4212}$ computes for all non-zero
entries $C[i,j]$ single witnesses
of rank $O(\lceil W_C(i,j)/k\rceil )$ almost certainly
in time substantially subsuming
the best known upper time bound for 
computing maximum witnesses for all non-zero
entries of $C.$ In particular, if the number of
witnesses for each entry of $C$ is upper bounded
by $w \le n^{0.4212}$ then by setting $k=w,$
we obtain for all non-zero entries of $C$
a witness of rank $O(1)$ almost certainly,
substantially faster than maximum witnesses
for these entries.
\end{corollary}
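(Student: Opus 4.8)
The plan is to derive Corollary~\ref{cor: approx} directly from Theorem~\ref{theo: main2} together with Facts~\ref{fact: omega},~\ref{fact: alpha},~\ref{fact:1}, and the known $O(n^{2+\lambda})$ upper bound on the MW problem from~\cite{CK07}, where $\omega(1,\lambda,1) = 1 + 2\lambda$. First I would recall that, under the current numerical bounds, the best known deterministic MW bound is roughly $O(n^{2.569})$, so the goal is to identify for which range of $k$ the running time $\tilde{O}(n^{\omega}k^{0.4653} + n^2 k)$ of Theorem~\ref{theo: main2} stays comfortably below this. Both terms must be dominated, so I would impose $n^{\omega}k^{0.4653} = o(n^{2+\lambda})$ and $n^2 k = o(n^{2+\lambda})$ and solve for $k$.

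The main computation is the first inequality: writing $k = n^{\kappa}$, the condition $n^{\omega + 0.4653\,\kappa} \le n^{2+\lambda-\epsilon}$ becomes $\kappa \le (2 + \lambda - \omega)/0.4653$. Plugging in $\omega \le 2.3728639$ (Fact~\ref{fact: omega}) and the value of $\lambda$ from the $\omega(1,\lambda,1)=1+2\lambda$ equation evaluated via Fact~\ref{fact:1} and Fact~\ref{fact: alpha} (which yields $\lambda$ slightly below $0.569$, so $2+\lambda \approx 2.569$), gives $\kappa \lesssim (2.569 - 2.373)/0.4653 \approx 0.196/0.4653 \approx 0.4212$. The second inequality, $n^{2+\kappa} = o(n^{2+\lambda})$, requires only $\kappa < \lambda \approx 0.569$, which is implied by the first. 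This yields the stated range $4 \le k \le n^{0.4212}$. I would then note that in this range the running time of Algorithm~6 is $\tilde{O}(n^{2+\lambda-\Omega(1)})$, i.e.\ it ``substantially subsumes'' the MW bound, as claimed.

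For the second part of the corollary, if every entry of $C$ has at most $w \le n^{0.4212}$ witnesses, I would set $k = w$ in Theorem~\ref{theo: main2}. Then $\lceil W_C(i,j)/k \rceil = \lceil W_C(i,j)/w \rceil = 1$ for every non-zero entry (since $W_C(i,j) \le w$), so the reported witness has rank $O(1)$; the running time is $\tilde{O}(n^{\omega} w^{0.4653} + n^2 w)$, which by the range restriction on $w$ is again $\tilde{O}(n^{2+\lambda - \Omega(1)})$, hence substantially faster than computing maximum witnesses. Everything here is ``almost certainly'' because Theorem~\ref{theo: main2} already delivers that success probability after the $O(\log n)$ repetitions.

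The only delicate point — and the step I would be most careful about — is the arithmetic in extracting the exponent $0.4212$: it depends on the precise numerical values of $\omega$ and $\alpha$ fed through $\beta(\cdot)$ in Fact~\ref{fact:1} to compute $\lambda$, and on the constant $0.4653$ (itself $\mu = (3-\omega-\alpha)/(1-\alpha)$ from Fact~\ref{fact: wit}) in the denominator. I would double-check that the inequalities are strict with a genuine $\Omega(1)$ gap (not merely $o(1)$) so that the phrase ``substantially subsuming'' is justified, and that rounding $0.4212$ downward keeps us safely inside the admissible region. No further obstacle is expected; the rest is a direct substitution into the already-proven Theorem~\ref{theo: main2}.
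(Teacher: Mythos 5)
Your proposal is correct and matches the paper's route exactly: the paper states the corollary as an immediate consequence of Theorem~\ref{theo: main2} without spelling out the arithmetic, and your computation $\kappa \le (2+\lambda-\omega)/0.4653 \approx (2.569-2.373)/0.4653 \approx 0.4212$ (with the $n^2k$ term dominated automatically) is precisely the implicit justification of the exponent $0.4212$, as is the substitution $k=w$ for the second claim. The only caveat, which you already flag, is that at the endpoint $k=n^{0.4212}$ the gap to $n^{2.569}$ is only about $0.0002$ in the exponent, so ``substantially'' is really earned only for $k$ bounded away from that endpoint.
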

\junk{
Importantly, 
Algorithm 5 and hence Theorem \ref {theo: main2}
and Corollary \ref{cor: approx} can be derandomized analogously as the algorithm
of Alon and Naor for the $1$-witness problem by 
using the so called small probability spaces keeping
the asymptotic upper bounds \cite{AN96}.}

\section{Final remarks}
Due to the quantum search for the minimum,
the MW problem is relatively easier in
the quantum computation model.  Already the straightforward quantum
algorithm (Algorithm 1) 
running in $\tilde{O}(n^{2.5})$ time, is substantially faster than the
best known algorithm for MW in the standard model running in
$O(n^{2.569})$ time (originally, $O(n^{2.575})$ time \cite{CK07}).
Also, the gap between our fastest algorithm for MW (Algorithm 4)
running in $O(n^{2.434})$ time and the fastest algorithm for Boolean
matrix product in the quantum computation model is substantially
smaller than the corresponding gap in the standard model. An
additional reason here is that no quantum algorithm for Boolean matrix
product in general case faster than the algebraic ones in the standard
model is known so far.  

Our input-sensitive quantum algorithm for MW (Algorithm 3)
is faster than our $O(n^{2.434})$-time algorithm for MW
(Algorithm 4) when one of the input $n\times n$ matrices has
a number of non-zero entries substantially smaller than $O(n^{1.868})$.
Similarly, our output-sensitive quantum algorithm for MW (Algorithm 2)
is faster than the $O(n^{2.434})$-time algorithm for MW
when the number $s$ of non-zero entries in
the product matrix is substantially smaller than  $O(n^{1.934})$.

Our approximation algorithm for MW could be used for example
to find triangles passing through specified edges 
approximating heaviest ones in vertex weighted graph (cf. \cite{VWY10}).

\section*{Acknowledgments}
The authors thank Francois Le Gall for a useful clarification
of the current status of quantum algorithms for
Boolean matrix product.
The research has been
supported in part by 
Swedish Research Council grant  621-2017-03750.

\end{document}